\newtheorem{xdefinition}{Definition}
\newtheorem{xobservation}{Observation}
\newtheorem{xtheorem}{Theorem}
\newtheorem{xlemma}{Lemma}
\newtheorem{xclaim}{Claim}
\newtheorem{xproposition}{Proposition}
\newtheorem{xcorollary}{Corollary}
\newtheorem{xexample}{Example}
	{\hspace*{\fill}\raisebox{-1pt}{\boldmath$\Box$}\end{xdefinition}}
	{\hspace*{\fill}\raisebox{-1pt}{\boldmath$\Box$}\end{xobservation}}
\newenvironment{theorem}{\begin{xtheorem}\rm}{\end{xtheorem}}
\newenvironment{lemma}{\begin{xlemma}\rm}{\end{xlemma}}
\newenvironment{proposition}{\begin{xproposition}\rm}{\end{xproposition}}
\newenvironment{corollary}{\begin{xcorollary}\rm}{\end{xcorollary}}
\newenvironment{proof}{\begin{trivlist}\item[]{\bf Proof }}%
	{\hspace*{\fill}\raisebox{-1pt}{\boldmath$\Box$}\end{trivlist}}
\newenvironment{example}{\begin{xexample}\rm}%
	{\hspace*{\fill}\raisebox{-1pt}{\boldmath$\Box$}\end{xexample}}
\newcommand{\OPT}{\ensuremath{\operatorname{\textsc{Opt}}}\xspace}
\newcommand{\ALG}{\ensuremath{\operatorname{\textsc{Alg}}}\xspace}
\newcommand{\CRS}{\ensuremath{\operatorname{\textsc{CRS}}}\xspace}
\newcommand{\RT}{\ensuremath{\operatorname{\textsc{RT}}}\xspace}
\newcommand{\TG}{\ensuremath{\operatorname{\textsc{TG}}}\xspace}
\newcommand{\TRUST}{\ensuremath{\operatorname{\textsc{Trust}}}\xspace}
\newcommand{\GREEDY}{\ensuremath{\operatorname{\textsc{Greedy}}}\xspace}
\newcommand{\TRUSTGREEDY}{\ensuremath{\operatorname{\textsc{TrustGreedy}}}\xspace}
\newcommand{\ROBUSTTRUST}{\textsc{RobustTrust}\xspace}
\newcommand{\FN}{\ensuremath{\operatorname{\textsc{FN}}}\xspace}
\newcommand{\FP}{\ensuremath{\operatorname{\textsc{FP}}}\xspace}
\newcommand{\TP}{\ensuremath{\operatorname{\textsc{TP}}}\xspace}
\newcommand{\TN}{\ensuremath{\operatorname{\textsc{TN}}}\xspace}
\newcommand{\F}{\ensuremath{\operatorname{\mathcal{F}}}\xspace}
\newcommand{\FEAS}{\ensuremath{F}\xspace}
\newcommand{\IOPT}{\ensuremath{\operatorname{\mathit{I^\ast}}}\xspace}
\newcommand{\IPRED}{\ensuremath{\operatorname{\mathit{\hat{I}}}}\xspace}
\newcommand{\EPRED}{\ensuremath{\operatorname{\mathit{\hat{E}}}}\xspace}
\newcommand{\CEIL}[1]{\left\lceil#1\right\rceil}
\newcommand{\FLOOR}[1]{\left\lfloor#1\right\rfloor}
\newcommand{\SET}[1]{\left\{#1\right\}}
\newcommand{\SETOF}[2]{\SET{#1 \mid #2}}
\newcommand{\OPTI}{\ensuremath{\OPT}\xspace}
\newcommand{\OPTFN}{\ensuremath{\OPT^{\FN}}\xspace}
\newcommand{\OFN}{\ensuremath{O^{\FN}}\xspace}
\newcommand{\TGFN}{\ensuremath{\TG^{\FN}}\xspace}
\newcommand{\TGTP}{\ensuremath{\TG^{\TP}}\xspace}
\newcommand{\TGI}{\ensuremath{\TG}\xspace}
\newcommand{\OPTHAT}{\ensuremath{\OPT^{\TP}}\xspace}
\newcommand{\error}{\ensuremath{\eta}\xspace}
\newcommand{\ernorm}{\ensuremath{\gamma(\IPRED,I)}\xspace}
\newcommand{\ernormw}{\ensuremath{\gamma(\IPRED_w,I_w)}\xspace}
\newcommand{\ernormE}{\ensuremath{\gamma(\EPRED,E)}\xspace}
\newcommand{\eps}{\ensuremath{\varepsilon}\xspace}
\newlength{\mydaggerlen}
\begin{document}

\title{Online Interval Scheduling with Predictions\,\thanks{Boyar, Favrholdt, and Larsen were supported in part by the Danish Council for Independent Research grants DFF-0135-00018B and DFF-4283-00079B, and Kamali was supported in part by the Natural Sciences and Engineering Research Council of Canada (NSERC). A preliminary extended abstract of some of these results was published by the same authors in the 18th International Algorithms and Data Structures Symposium (WADS), volume 14079 of Lecture Notes in Computer Science, pages 193-207. Springer, 2023.}}

\author{\begin{tabular}{c@{\hspace*{2em}}c}Joan Boyar\textsuperscript{$\dagger$} & Lene M. Favrholdt\textsuperscript{$\dagger$} \\[1ex] Shahin Kamali\textsuperscript{$\ddagger$} & Kim S. Larsen\textsuperscript{$\dagger$}\end{tabular} \\[5ex]
\small
\begin{tabular}[t]{l}
  {}\textsuperscript{$\dagger$}University of Southern Denmark \\ \hspace{\mydaggerlen} \texttt{https://imada.sdu.dk/u/\{joan,lenem,kslarsen\}} \\[2ex]
  {}\textsuperscript{$\ddagger$}York University \\ \hspace{\mydaggerlen} \texttt{https://www.eecs.yorku.ca/\raisebox{-.7ex}{\textasciitilde{}}kamalis/}\end{tabular} \\ \mbox{}}

\date{January 5, 2025}

\maketitle


\begin{abstract}
In online interval scheduling, the input is an online sequence of intervals, and the goal is to accept a maximum number of non-overlapping intervals. In the more general disjoint path allocation problem, the input is a sequence of requests, each consisting of pairs of vertices of a known graph, and the goal is to accept a maximum number of requests forming edge-disjoint paths between accepted pairs. We study a setting with a potentially erroneous prediction specifying the set of requests forming the input sequence and provide tight upper and lower bounds on the competitive ratios of online algorithms as a function of the prediction error. We also present asymptotically tight trade-offs between consistency (competitive ratio with error-free predictions) and robustness (competitive ratio with adversarial predictions) of interval scheduling algorithms. Finally, we provide experimental results on real-world scheduling workloads that confirm our theoretical analysis.
\end{abstract}

\section{Introduction}
In the \emph{interval scheduling problem}, the input is a set of intervals with integral endpoints, each representing timesteps at which a process starts and ends. A scheduler's task is to decide whether to accept or reject each job so that the intervals of accepted jobs do not overlap except possibly at their endpoints. The objective is to maximize the number of accepted intervals, referred to as the \emph{profit} of the scheduler. This problem is also known as \emph{fixed job scheduling} and \emph{k-track assignment}~\cite{kolen2007interval}. 

Interval scheduling is a special case of the \emph{disjoint path allocation problem}, 
where the input is a graph $G$ and a set of $n$ \emph{requests}, each defined by a pair of vertices in $G$. An algorithm can accept or reject each pair, given that it can form a path between the vertices of each of the accepted pairs, such that all of the paths are edge-disjoint.
Interval scheduling is the particular case when $G$ is a path graph.
The disjoint path allocation problem can be solved in polynomial time for
trees~\cite{GVV97} and outerplanar graphs 
by a combination of~\cite{WW95,MNS85,F85},
but the problem is NP-complete for general graphs~\cite{EIS76},
and even on quite restricted graphs such as series-parallel graphs~\cite{NVZ01}.
The disjoint path problem is the same as call
control/call allocation with all calls having unlimited duration and all bandwidths (both of the calls and the edges
they would be routed on) being equal to 1 and as the maximum multi-commodity
integral flow problem with edges having unit capacity.

In this work, we focus on the \emph{online} variant of the problem, in which the set of requests is not
known in advance but is revealed in the form of a request sequence, $I$. Each new request must either be irrevocably accepted or rejected.
On acceptance, the algorithm selects and commits to a particular path, edge-disjoint from
any previously selected path.
We analyze an online algorithm via a comparison with an optimal offline algorithm, \OPT. 
An online algorithm \ALG is \emph{$c$-competitive}, if there exists a constant $b$ such that,
for any input graph $G$ and any sequence $I$ of requests on $G$, $\ALG(I) \geq c \cdot \OPT(I) - b$, where $\ALG(I)$ and $\OPT(I)$, respectively, denote the profit of \ALG and \OPT on $I$ (for randomized algorithms, $\ALG(I)$ is the expected profit of $\ALG$).
If $b=0$, \ALG is \emph{strictly $c$-competitive}.
The \emph{competitive ratio} of \ALG is defined as $\sup \SET{c \mid \text{\ALG is $c$-competitive}}$ and its \emph{strict competitive ratio} is $\sup \SET{c \mid \text{\ALG is strictly $c$-competitive}}$.
Since we consider a maximization problem, our ratios are between zero
and one.
Our results are strongest possible in the sense that all lower bounds
(positive results) are valid with respect to the strict competitive
ratio and all upper bounds (negative results) are valid with respect
to the (non-strict) competitive ratio. 

In interval scheduling, there is a distinction between the \emph{any-order} setting~\cite{BorodinK23}, where the adversary determines the ordering of intervals, and the \emph{sorted-order}~\cite{LiptonT94} setting, where intervals arrive in order of their starting times. The focus of this paper is on the any-order setting. Since
the positive results presented in the paper are established for the any-order setting, they also hold for the sorted-order setting.
Note that, in contrast to~\cite{BorodinK23}, we do not allow preemption.

For interval scheduling on a path graph with $m$ edges, the {\em strict} competitive ratio of the best deterministic algorithm
is $\frac{1}{m}$~\cite{BE98}, and no deterministic algorithm can be better than $\frac{1}{\sqrt{m}}$-competitive as seen by the following adversarial strategy. First $\sqrt{m}$ disjoint intervals of length $\sqrt{m}$ are given. Then, for each interval $S$ accepted by the algorithm, $\sqrt{m}$ disjoint unit intervals overlapping $S$ are given.
Note that, in~\cite{BE98}, intervals model calls of limited duration. In that model, sequences can be repeated an arbitrary number of times, and thus, there is no difference between the strict and the non-strict competitive ratio.
The best randomized algorithm has a competitive ratio of $\Theta(\frac{1}{\log m})$~\cite{BE98}. For this result, the upper bound (negative result) proof of~\cite{BE98} also holds for our model. Moreover, the upper bound follows from our Theorem~\ref{th:constrob} with $\alpha=0$.
These results suggest that the constraints on online algorithms must be relaxed to compete with \OPT. Specifically, the problem has been considered in the \emph{advice complexity model} for path graphs~\cite{BBFGJKSS14,GKKKS15}, trees~\cite{BBK22}, and grid graphs~\cite{BKW22}. Under the advice model, the online algorithm can access error-free information on the input called advice.
The objective is to quantify the trade-offs between the competitive ratio and the size
of the advice.
Another relaxation of the online model is to consider \emph{priority algorithms}~\cite{BNR02},
where the algorithm is allowed to give priorities to the entire set of possible input
items and always receive the input with highest priority next. The algorithm processes
the items it receives in an online manner. Priority algorithms are a model for
greedy algorithms that has also been studied for graph problems, in particular~\cite{BBLM10j,DI09}. Priority algorithms for the disjoint path allocation problem
for path graphs, trees, and grid graphs are studied in~\cite{BFH24}, along with
priority algorithms in a model that includes advice~\cite{BLP24j}.

In recent years, there has been an increasing interest in improving the performance of online algorithms via the notion of \emph{prediction}. Here, it is assumed that the algorithm has access to a prediction, for instance in the form of machine-learned information. Unlike the advice model, the prediction may be erroneous and is quantified by an \emph{error measure $\eta$}. The objective is to design algorithms whose competitive ratio degrades gently as a function of $\eta$. 
Several online optimization problems have been studied under the prediction model, including non-clairvoyant scheduling~\cite{NIPS2018_8174,WeiZ20}, makespan scheduling~\cite{lattanzi2020online}, contract scheduling~\cite{AK21,abs-2111-05281}, and other variants of scheduling problems~\cite{AzarLT21,LeeMHLSL21,BampisDKLP22,BalkanskiGT23}.

Other online problems studied under the prediction model include 
bin packing~\cite{0001DJKR20}, knapsack~\cite{Zeynali0HW21,knapsack22,BoyarFL22}, caching~\cite{LV21,rohatgi2020near,W20,succinctPaging},
matching problems~\cite{AGKK20,LLMV20,LavastidaM0X21}, time series search~\cite{AAAI22Search}, and various graph problems~\cite{ChenSVZ22,EberleLMNS22,ChenEILNRSWWZ22,AzarPT22,BanerjeeC0L23}.
See also the survey by Mitzenmacher and Vassilvitskii~\cite{mitzenmacher2020algorithms} and the collection at~\cite{ALPS}.

\subsection{Contributions}
We study the online disjoint path allocation problem under a setting where the scheduler is provided with a set $\IPRED$ of requests predicted to form the input sequence $I$.
Given the erroneous nature of the prediction, some 
requests in $\IPRED$ may be incorrectly predicted to be in $I$ (false positives), and some requests in $I$ may not be included in $\IPRED$ (false negatives). We let the \emph{error set} be the set of requests that are false positives or false negatives and define the error parameter $\eta(\IPRED,I)$ to be the cardinality of the largest set of requests in the error set that can be accepted. For interval scheduling, this is the largest set of non-overlapping intervals in the error set. Thus, $\eta(\IPRED,I) = \OPT(\FP \cup \FN)$, where \FP and \FN are the sets of false positives and negatives, respectively. 
We explain later that this definition of $\eta$ has specific desired properties for the prediction error (Proposition~\ref{propo:proper}).
In the following, we use $\ALG(\IPRED, I)$ to denote the profit of an algorithm \ALG for prediction $\IPRED$ and input $I$. We also define $\gamma(\IPRED,I) = \eta(\IPRED,I)/\OPT(I)$; this \emph{normalized error} measure is helpful in describing our results because the point of reference in the competitive analysis is $\OPT(I)$. 
Our first result concerns general graphs.

\begin{description}
\item[Disjoint-Path Allocation]\mbox{}

We study a simple algorithm \TRUST, which accepts a request only if it
belongs to a given optimal solution for $\IPRED$.
We show that, \TRUST is strictly $(1-2\gamma)$-competitive (Theorem~\ref{th:trustCompetitive}).
Furthermore, this is best possible among deterministic algorithms,
even on the graph class trees (Theorem~\ref{th:lowerstar}).
\end{description}

The above result demonstrates that even for trees, 
the problem is so hard that no algorithm can do better than the trivial \TRUST.
Therefore, our main results concern the more interesting case of path graphs, that is, interval scheduling:
\begin{description}
\item[Interval Scheduling] \mbox{}
  
We first show that no deterministic interval scheduling algorithm can
be better than $(1-\gamma)$-competitive
(Theorem~\ref{thm:generallower}).

Next, we show that \TRUST is no better on the path than on trees; its competitive ratio is only $1-2\gamma$
(Theorem~\ref{th:intervaltrustlower}). This suggests that there is room for improvement over \TRUST.

Finally, we introduce our main technical result, a deterministic
algorithm \TRUSTGREEDY that achieves an optimal competitive ratio of
$1-\gamma$ for interval scheduling (Theorem~\ref{th:trustgreedymain}). \TRUSTGREEDY is similar to \TRUST in that it maintains an optimal solution for $\IPRED$, but unlike \TRUST, it updates its planned solution to accept requests greedily when it is possible without a decrease in the profit of the maintained solution.

\begin{description}
\item[Consistency-Robustness Trade-off]\mbox{}

  We study the trade-off between \emph{consistency} and \emph{robustness}, which measure an algorithm's competitive ratios in the extreme cases of error-free prediction (consistency) and adversarial prediction (robustness)~\cite{DBLP:conf/icml/LykourisV18,LV21}. We show that no deterministic algorithm with a constant consistency can have robustness $\omega(1/m)$. Thus, we focus on the more interesting case of randomized algorithms. (Proposition~\ref{prop:deterconstrob}).
  Suppose that for any input $I$, an algorithm \ALG
  guarantees a consistency of $\alpha$
    and robustness of $\beta \geq
  \frac{1}{m^{1-\eps}}$, $\eps>0$.
We show that
\begin{align*}
& \alpha \leq 1-\beta \left( \frac{\eps \log m}{2} - O(1) \right)
\text{ and } \\
& \beta \leq (1-\alpha) \frac{2}{\eps \log m} + O\left(\frac{1}{\log^2 m} \right) 
\end{align*}
(Theorem~\ref{th:constrob}).
Note that, if $\beta \in \Omega(\frac{1}{\log m})$, the constraint $\beta \geq \frac{1}{m^{1-\eps}}$ is fulfilled for any $\eps < 1$.
Thus, for example, to guarantee a robustness of $\frac{1}{5 \log m}$, the consistency must be at most $\frac{9}{10}+O(\frac{1}{\log m})$, and to guarantee a consistency of $\frac{9}{10}$, the robustness must be at most $\frac{1}{5 \log m} + O(\frac{1}{\log^2 m})$. 
We also present a family of randomized algorithms that provides an almost \emph{Pareto-optimal} trade-off between consistency and robustness (Theorem~\ref{th:robtrust}).  

\item[Experiments on Real-World Data]\mbox{}

We compare our algorithms with \OPT and the online \GREEDY algorithm (which accepts an interval if and only if it does not overlap previously accepted intervals) on real-world scheduling data from~\cite{ChapinCFJLSST99}. Our results are in line with our theoretical analysis: both \TRUST and \TRUSTGREEDY are close-to-optimal for small error values; \TRUSTGREEDY is almost always better than \GREEDY even for large values of error, while \TRUST is better than \GREEDY only for small error values.   

\end{description}
\item[Matching and Independent Set]\mbox{}

  We explain that our results on disjoint-path allocation carry over to matching in the edge-arrival model (Corollary~\ref{cor:matching}) and to  
  independent set on line graphs in the vertex-arrival model (Corollary~\ref{cor:independentset}).
\end{description}

\section{Model and Predictions}
Throughout the paper we let $m$ denote the number of edges in the
graph and let $n$ denote the number of requests in the input sequence.

We assume that an oracle provides the online algorithm with a set \IPRED of requests predicted to form the input sequence $I$. 
This type of prediction arises naturally in scenarios such as call admission, where the input sequence represents the calls made on a given day. By analyzing historical data from previous days, one can predict, albeit with some error, whether two nodes in the call network will initiate a call on that day.

One may consider alternative predictions, such as statistical information about the input. While these predictions are compact and can be efficiently learned, they cannot help achieve close-to-optimal solutions. In particular, for interval scheduling on a path with $m$ edges, since the problem is AOC-complete, one cannot achieve a competitive ratio  $c \leq 1$ with fewer than $c n/(e \ln 2)$ bits~\cite{BFKM17},
even if all predictions are correct~\cite{GKKKS15}. In particular, to achieve a competitive ratio $c \in \Omega(1/\log m)$---the best attainable by randomized algorithms without prediction---one requires a prediction of size $\Omega(n/\log m)$, which grows linearly with the input length (and thus cannot merely be statistical information).

\subsection{Error Measure}
In what follows, \emph{true positive} (respectively, \emph{negative}) intervals are correctly predicted to appear (respectively, not to appear) in the
request sequence. \emph{False positives} and \emph{negatives} are defined analogously as
those incorrectly predicted to appear or not appear.
We let \TP, \TN, \FP, \FN denote the four sets containing these different types
of intervals.
Thus, $I = \TP \cup \FN$ and $\IPRED = \TP \cup \FP$.
We use $\eta(\IPRED,I)$, to denote the error for the input formed by the sequence $I$, when the prediction is the set \IPRED.
When there is no risk of confusion, we use $\eta$ instead of
$\eta(\IPRED,I)$.
The error measure we use here is \[\eta=\OPT(\FP\cup\FN)\,,\] and hence, the normalized error measure is \[\gamma=\frac{\OPT(\FP\cup\FN)}{\OPT(I)}\,.\]

Our error measure has three desirable properties (see below), the first
two of which were recommended in Im et al.~\cite{IKQP21a}.
The first property ensures that improving the prediction does not lead
to a larger error.
The other two ensure that the error is neither too small nor too large,
which is crucial for distinguishing between good and bad algorithms.

In Section~\ref{sec:otherMeasures}, we discuss other natural error models, such as the Hamming distance between the request sequence and prediction, and explain why these measures do not have our desired properties. 

\begin{description}
\item[Monotonicity]\mbox{}
  
	This property ensures that improving the prediction to increase the number
	of true positives or negatives does not
	increase the error.
	To be more precise, if we increase $|\TP|$ by one unit (decreasing $|\FN|$ by
	one unit) or increase $|\TN|$ by one unit (decreasing $|\FP|$ by one unit),
	the error must not increase.
	Formally, for any $I$, \IPRED, the following must hold.
	\begin{itemize}
		\item For any $x\in I\setminus \IPRED$,
		$\eta(I, \IPRED \cup \{ x\}) \leq \eta(\IPRED,I)$.
		\item For any $y\in \IPRED\setminus I$, $\eta(I, \IPRED \setminus \{ y\}) \leq \eta(\IPRED,I)$.
	\end{itemize}
      \item[Lipschitz property]\mbox{}

        This property requires the error to be at least equal to the net difference between $\OPT(I)$ and $\OPT(\IPRED)$, that is,
	\[\eta(\IPRED,I) \geq |\OPT(I)-\OPT(\IPRED)|\,.\]
\end{description}
	Note that the Lipschitz property ensures that the error is not ``too
	small'', causing all algorithms to have a bad competitive
        ratio, even when the error, according to the error measure, is
        low. To be able to distinguish between good and mediocre
        algorithms, we must also avoid that the error becomes ``too large''.
        Hence, we also define a notion of Lipschitz-completeness,
        imposing an upper bound on the error.
        The particular upper bound that we use is motivated by the
        following example. 
        \begin{example}
          \label{ex:completeness}
	  The input is formed by a set $I = A \cup B$ of requests, with $A=\{A_1, A_2, \ldots, A_k\}$ and $B = \{B_1, B_2, \ldots, B_{k-1}\}$, where the $A_i$'s are disjoint, the $B_i$'s are disjoint, and $B_i$ overlaps $A_i$ and $A_{i+1}$. The profit of the optimal solution is then $\OPT(I) = |A| = k$. Suppose the prediction is $\IPRED = ( A \setminus \{A_1, A_2\} ) \cup B$, and note that $\OPT(\IPRED) = |B| = k-1$. The optimal solutions for $I$ and \IPRED are disjoint but $|\OPT(I) - \OPT(\IPRED)| = 1$, $\FP=0$ and $\FN=2$ (Figure~\ref{fig:lipExample}).
        \end{example}
        \begin{figure}
        \centering
        \scalebox{.8}{\lipCompExample}
        \caption{An illustration of Example~\ref{ex:completeness}. All requests appear in both $I$ and $\IPRED$, except $\{A_1,A_2\}$, which are false negatives.}\label{fig:lipExample}
        \end{figure}
        
        In this example, the error should be relatively small, independent of $k$. More generally, 
	the error measure must not grow with the dissimilarity between
        the optimal solutions for $I$ and $\IPRED$, but rather with
        the size of the optimal solution for $\FP$ and $\FN$.  This is
        guaranteed by the following property.
\begin{description}
\item[Lipschitz-completeness]\mbox{}
  
        An error
        measure is Lipschitz-complete, if for any $I$, \IPRED, the
        following holds.
	\[\eta(I,\IPRED) \leq  \OPT(\FP \cup
	\FN).\] 
	
\end{description}

\begin{proposition}
\label{propo:proper}
The error measure $\eta(\IPRED,I) = \OPT(\FP\cup\FN)$ is monotone, Lipschitz, and Lipschitz-complete.
\end{proposition}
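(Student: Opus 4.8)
The plan is to verify the three properties in turn. Two of them are essentially immediate, and the Lipschitz property is the only one that calls for a genuine argument.

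First, note that $\FP\cup\FN$ is exactly the symmetric difference of \IPRED and $I$, since $\FP=\IPRED\setminus I$ and $\FN=I\setminus\IPRED$. Because $\eta$ is \emph{defined} to equal $\OPT(\FP\cup\FN)$, Lipschitz-completeness, $\eta(I,\IPRED)\le\OPT(\FP\cup\FN)$, holds with equality and requires no proof. For monotonicity I will use the elementary fact that \OPT is monotone under set inclusion: a feasible solution for a request set $R$ (a selection of requests together with pairwise edge-disjoint routings) is also feasible for any $R'\supseteq R$, so $\OPT(R)\le\OPT(R')$. Adding a false negative $x\in I\setminus\IPRED$ to the prediction turns $x$ into a true positive; the new false-negative set is $\FN\setminus\{x\}$, the false-positive set is unchanged, and $x\notin\FP$, so the new error set is $(\FP\cup\FN)\setminus\{x\}$. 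Hence $\eta(I,\IPRED\cup\{x\})=\OPT((\FP\cup\FN)\setminus\{x\})\le\OPT(\FP\cup\FN)=\eta(\IPRED,I)$. Removing a false positive $y\in\IPRED\setminus I$ is symmetric: since $y\notin I$, the error set again loses exactly the one element $y$. This settles both monotonicity conditions.

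For the Lipschitz property I want $\OPT(\FP\cup\FN)\ge|\OPT(I)-\OPT(\IPRED)|$, and it suffices to prove $\OPT(I)\le\OPT(\IPRED)+\OPT(\FP\cup\FN)$: the identical argument with the roles of $I$ and \IPRED interchanged (which leaves $\FP\cup\FN$ unchanged) gives $\OPT(\IPRED)\le\OPT(I)+\OPT(\FP\cup\FN)$. Fix an optimal solution $S^\ast$ for $I$, so $|S^\ast|=\OPT(I)$, consisting of accepted requests with a fixed choice of pairwise edge-disjoint paths. Using $I=\TP\cup\FN$, split $S^\ast=(S^\ast\cap\TP)\cup(S^\ast\cap\FN)$. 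Keeping the paths already committed in $S^\ast$, the part $S^\ast\cap\TP$ is a feasible solution all of whose requests lie in $\TP\subseteq\IPRED$, so $|S^\ast\cap\TP|\le\OPT(\IPRED)$; likewise $S^\ast\cap\FN$ is feasible with requests in $\FN\subseteq\FP\cup\FN$, so $|S^\ast\cap\FN|\le\OPT(\FP\cup\FN)$. Summing yields $\OPT(I)\le\OPT(\IPRED)+\OPT(\FP\cup\FN)$, as needed.

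I do not anticipate a real obstacle: the proposition is a verification that the chosen $\eta$ has the stipulated structural features, not a deep statement. The one point worth stating explicitly is the feasibility-preservation step in the Lipschitz argument — discarding some accepted requests together with their committed paths from a feasible disjoint-path solution leaves the remaining paths edge-disjoint — which, for interval scheduling, is just the observation that any subcollection of pairwise non-overlapping intervals is again pairwise non-overlapping.
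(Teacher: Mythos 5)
Your proof is correct and follows essentially the same route as the paper's: monotonicity from monotonicity of \OPT under set inclusion, Lipschitz-completeness holding by definition, and the Lipschitz property via the bound $\OPT(I)\le\OPT(\IPRED)+\OPT(\FP\cup\FN)$. The only (cosmetic) differences are that your Lipschitz step decomposes an optimal solution for $I$ directly into its \TP and \FN parts, where the paper chains monotonicity and subadditivity of \OPT, and that you spell out both directions of the absolute value where the paper writes out only one.
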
  

\begin{proof}
  We check all properties listed above. In all cases, we leverage the straightforward monotonicity property of $\OPT$ that the optimal profit of an input $I$ is no greater than the optimal profit of $I \cup \SET{x}$ for any item $x$, since \OPT can always choose to not include $x$ in the solution.
	\begin{itemize}
		\item Monotonicity: First, consider increasing
		the number of true positives. Let $x\in I\setminus \IPRED$.
		Since $x$ is a false negative, it
		may or may not have been counted in $\OPT(\FP\cup\FN)$, but removing it from
		\FN (thus adding it to \TP) cannot make $\OPT(\FP\cup\FN)$
		larger, i.e.,  \[\eta(I,\IPRED\cup\{ x\}) = \OPT(\FP\cup(\FN\setminus\{ x\})) \leq \OPT(\FP\cup\FN)=\eta(I,\IPRED)\,.\]
		Similarly, for any $y \in \IPRED \setminus I$, $\OPT((\FP\setminus\{ y\})
		\cup \FN)$ cannot be larger than $\OPT(\FP\cup\FN)=\eta(I,\IPRED)$, so
		\[\eta(I,\IPRED\setminus\{ y\}) = \OPT((\FP\setminus\{ y\})
		\cup \FN) \leq \OPT(\FP \cup \FN) = \eta(I,\IPRED)\,.\]
		
		\item Lipschitz property: We need to show that \[\OPT(\FP\cup\FN)
		\geq |\OPT(I)-\OPT(\IPRED)|\,.\]
		
		We note that
		\begin{align*}
			\OPT(I) & = \OPT((\IPRED \setminus \FP) \cup \FN)\\
			& \leq \OPT(\IPRED \cup \FN)\\
			& \leq \OPT(\IPRED) + \OPT(\FN)\,,
		\end{align*}
		which implies
		\[\OPT(I) - \OPT(\IPRED) \leq \OPT(\FN) \leq \OPT(\FP \cup \FN).
		\]
		
		\item Lipschitz-completeness: Follows trivially with the suggested bound,
		since $\eta =\OPT(\FP\cup\FN)$. 
	\end{itemize}
\end{proof}

\subsubsection{Alternative Error Measures.}
\label{sec:otherMeasures}
In what follows, we review a few alternative error measures that do
not have all of our desired properties of monotonicity, Lipschitz, and Lipschitz-completeness.
\begin{itemize}
	\item Hamming distance between the bit strings representing the
	request sequence and the predictions, that is, the total number of false positives and false negatives:
	\[|\FP|+|\FN| = |I \cup \IPRED| - |I \cap \IPRED| =  |(I \cup
	\IPRED) \setminus (I \cap \IPRED)|\]
	This measure fails Lipschitz-completeness. For instance, consider $I = \{(1,m)\}$ and $\IPRED = \bigcup_{i=1}^{m-1} (i,m)$. In this case, $\OPT(\FP \cup \FN) = 1$ (any pair of intervals intersect) and $|\FP| + |\FN|  = m-2$. Thus, $|\FP|+ |\FN| \not\leq \OPT(\FP \cup \FN)$.

        Note that using either $|\FP|$ or $|\FN|$ alone also fails the Lipschitz-completeness property, by the same example.
	
        \item Let $\OPT[I]$ and $\OPT[\IPRED]$ denote optimal solutions to the instances formed by $I$ and \IPRED, respectively.
        Using $\OPT[I]$ and $\OPT[\IPRED]$ instead of $I$ and \IPRED
	in the above measure:
	\[\big| \big(\OPT[I] \cup \OPT[\IPRED]\big) \setminus \big(\OPT[I]
	\cap \OPT[\IPRED]\big) \big|\]
	also fails Lipschitz-completeness, according to Example~\ref{ex:completeness}.
	
\end{itemize}

Some care must also be put into how the error is normalized.
Below, we give two examples where the normalized error becomes too small.

\begin{itemize}
	\item Normalizing the Hamming distance, we obtain the Jaccard
	distance:
	\[\frac{|I \cup \IPRED| - |I \cap \IPRED|}{|I \cup \IPRED|}\]
	This measure is sensitive to \emph{dummy requests}: The adversary can construct a bad
	input and then add a lot of intervals to $I \cap \IPRED$ that
	neither the algorithm nor \OPT will choose, driving down the error, and, thus, failing the Lipschitz property.
	
	\item Normalizing by the total number of possible
	intervals (order $m^2$), the adversary can make the error
        of any input arbitrarily small by ``scaling up'' each edge to an arbitrarily long path, without changing an algorithm's profit, therefore failing the Lipschitz property.
	
\end{itemize}

Note that by normalizing by the size of an optimal solution instead of
the size of a set of intervals, we avoid giving the adversary the
power to drive down the error value.

\section{Disjoint-Path Allocation}  \label{sect:dpa}
In this section, we show that a simple algorithm \TRUST for the disjoint path allocation problem has an optimal competitive ratio for the graph class trees.
\TRUST simply relies on the prediction being
correct. Specifically, 
it computes an optimal solution \IOPT in~\IPRED before
processing the first request. Then, it accepts
any request in \IOPT that arrives and rejects all others.

We first establish that, \emph{on any graph}, 
$\TRUST(\IPRED,I) \geq \OPT(I) - 2\eta(\IPRED,I) = (1-2\ernorm) \OPT(I)$. 
The proof follows by observing that
\begin{enumerate}
\item false negatives cause a deficit of at most $\OPT(\FN)$ in the
  schedule of $\TRUST$ compared to $\IOPT$,
\item
false positives cause a deficit of at most $\OPT(\FP)$ relative to $\IOPT$, compared to the optimal schedule for $I$, and
\item $ \OPT(\FP) + \OPT(\FN) \leq 2\OPT(\FP\cup\FN) = 2\eta$.
\end{enumerate}

\begin{theorem}
\label{pr:trustupper}
The algorithm \TRUST is strictly $(1-2\gamma)$-competitive.
\label{th:trustCompetitive}
\end{theorem}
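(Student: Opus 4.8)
The plan is to make the three-step outline preceding the statement precise, keeping the argument graph-agnostic. \emph{Step 0 (identify \TRUST's output).} Let $\IOPT$ be the optimal solution \TRUST computes for $\IPRED=\TP\cup\FP$, so $\IOPT\subseteq\IPRED$ and $\SIZE{\IOPT}=\OPT(\IPRED)$. The arriving requests form $I=\TP\cup\FN$, and since $\TP,\TN,\FP,\FN$ are pairwise disjoint, $\IPRED\cap I=\TP$; hence the requests \TRUST plans for \emph{and} actually receives are exactly $\IOPT\cap I=\IOPT\cap\TP=\IOPT\setminus\FP$. For each of these \TRUST commits to the path used by $\IOPT$, and since these paths form a sub-collection of the edge-disjoint paths of $\IOPT$, \TRUST's schedule is feasible with profit
\[
\TRUST(\IPRED,I)=\SIZE{\IOPT}-\SIZE{\IOPT\cap\FP}=\OPT(\IPRED)-\SIZE{\IOPT\cap\FP}.
\]

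\emph{Step 1 (false-negative deficit).} I would invoke two elementary facts valid on every graph, both coming from the observation that restricting a set of edge-disjoint paths to a subset of requests again yields edge-disjoint paths: \OPT is monotone, $\OPT(X)\le\OPT(X\cup\SET{z})$, and subadditive, $\OPT(X\cup Y)\le\OPT(X)+\OPT(Y)$. Then $\OPT(I)=\OPT(\TP\cup\FN)\le\OPT(\TP)+\OPT(\FN)\le\OPT(\TP\cup\FP)+\OPT(\FN)=\OPT(\IPRED)+\OPT(\FN)$, i.e.\ $\OPT(\IPRED)\ge\OPT(I)-\OPT(\FN)$. \emph{Step 2 (false-positive deficit).} Since $\IOPT\cap\FP$ is a feasible sub-collection all of whose requests lie in $\FP$, we have $\SIZE{\IOPT\cap\FP}\le\OPT(\FP)$. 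Substituting Steps 1 and 2 into the profit expression gives $\TRUST(\IPRED,I)\ge\OPT(I)-\OPT(\FN)-\OPT(\FP)$.

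\emph{Step 3 (combine).} By monotonicity of \OPT, $\OPT(\FN)\le\OPT(\FP\cup\FN)=\eta$ and $\OPT(\FP)\le\OPT(\FP\cup\FN)=\eta$, so $\OPT(\FN)+\OPT(\FP)\le 2\eta$. Hence $\TRUST(\IPRED,I)\ge\OPT(I)-2\eta$, and since $\ernorm=\eta/\OPT(I)$ this is precisely $(1-2\ernorm)\OPT(I)$; the additive constant is $0$, so \TRUST is strictly $(1-2\ernorm)$-competitive. (If $\OPT(I)=0$ then $I=\emptyset$ and both sides vanish, so the degenerate normalization is harmless.)

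\emph{Where the care lies.} The computation is short, so the only points that genuinely need checking are (i) feasibility of \TRUST's committed routing, which is immediate once one notes that the accepted set is a subset of $\IOPT$ and the prescribed paths are reused verbatim, and (ii) that exactly the monotonicity and subadditivity properties of \OPT used above hold for disjoint path allocation on an arbitrary graph — they do, because restrictions of feasible solutions are feasible, the same facts already used in the proof of Proposition~\ref{propo:proper}. Since no property of path graphs is exploited, the bound holds on every graph, as the statement claims.
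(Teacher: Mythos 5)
Your proof is correct and follows essentially the same route as the paper's: bound the loss from false negatives by $\OPT(\FN)$ via $\OPT(I)\le\OPT(\IPRED)+\OPT(\FN)$, bound the loss from false positives by $\OPT(\FP)$, and combine using $\OPT(\FP)+\OPT(\FN)\le 2\OPT(\FP\cup\FN)=2\eta$. Your version merely makes explicit (via the exact identity $\TRUST(\IPRED,I)=\SIZE{\IOPT}-\SIZE{\IOPT\cap\FP}$ and the monotonicity/subadditivity of \OPT) a few steps the paper states informally.
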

\begin{proof}
	Since \IOPT is an optimal selection from $\TP \cup \FP$, the largest additional set of requests that 
	\OPT 
	would be able to accept from $I$ compared to \IOPT would be an
	optimal selection from~\FN.  Thus,
	$\OPT(I) \leq \OPT(\IOPT) + \OPT(\FN)$, and so
	$\OPT(\IOPT) \geq \OPT(I) - \OPT(\FN)$.
	
	Similarly, the largest number of requests that can be detracted from \TRUST is realized when requests that it planned
	to accept from \IOPT do not appear is $\OPT(\FP)$.
	Therefore, $\TRUST(\IPRED,I) \geq \OPT(\IOPT) - \OPT(\FP)$. Now,
	\[
	\begin{array}{rcl}
		\TRUST(\IPRED,I) & \geq & \OPT(\IOPT) - \OPT(\FP) \\
		& \geq & \OPT(I) - \OPT(\FN) - \OPT(\FP) \\
		& \geq & \OPT(I) - 2\OPT(\FP\cup\FN)\\
		& = & \OPT(I) - 2\eta(\IPRED,I)\\
		& = & \big(1- 2\ernorm\big) \OPT(I)         \end{array}
	\] 
\end{proof}

The following result shows that Theorem~\ref{pr:trustupper} is tight for the most interesting case of relatively small errors, and Corollary~\ref{cor:star} shows that this is the case even for trees. 

\begin{theorem}
  \label{thm:star}
  For $0 \leq \gamma \leq \frac14$,
  the competitive ratio of any
  deterministic algorithm for the online disjoint path allocation problem is at most $1-2\gamma$.
\label{th:lowerstar}
\end{theorem}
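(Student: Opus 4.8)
The statement is an impossibility (upper) bound, so the plan is to construct, for each target value $\gamma\in[0,\frac14]$, a graph $G$ (which can be taken to be a tree, so that Corollary~\ref{cor:star} follows as well), a prediction $\IPRED$, and an \emph{adaptive} adversary that reveals a request sequence $I$ with $\gamma(\IPRED,I)=\gamma$, such that every deterministic algorithm $\ALG$ satisfies $\ALG(I)\le(1-2\gamma)\OPT(I)$. Because the competitive ratio permits an additive constant, the family will be scalable: by repeating the construction we make $\OPT(I)$ arbitrarily large while the ratio $\ALG(I)/\OPT(I)$ stays pinned at $1-2\gamma$, which rules out any competitive ratio exceeding $1-2\gamma$. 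Note that Theorem~\ref{th:trustCompetitive} already gives the matching lower bound on the ratio, so the construction must be tuned so that, run against $\TRUST$ itself, it extracts a loss of \emph{exactly} $2\gamma\cdot\OPT(I)$, while against any other algorithm it extracts at least that.

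The instance will be a union of vertex-disjoint \emph{gadgets}, placed on disjoint subtrees so that $\OPT(\cdot)$ and $\eta=\OPT(\FP\cup\FN)$ decompose additively. Each gadget has $\OPT$ equal to $4$. An \emph{honest} gadget has its part of the prediction exactly correct, so it contributes $4$ to $\OPT(I)$ and $0$ to $\eta$, and on it $\ALG$ accepts at most $4$ requests. A \emph{trap} gadget is a fixed configuration on a constant-size subtree whose prediction is only slightly wrong: its false-positive set and its false-negative set each consist of a single request, and these two requests overlap (share an edge), so the gadget contributes exactly $1$ to $\eta$; the design guarantees that an adaptive adversary can force $\ALG$ to accept at most $2$ requests inside it, regardless of $\ALG$'s responses. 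With $n_T$ trap gadgets and $n_H$ honest gadgets we get $\OPT(I)=4(n_T+n_H)$, $\eta=n_T$, hence $\gamma=n_T/(4(n_T+n_H))$, which can be set to any value in $[0,\frac14]$ (up to the discreteness of $n_T,n_H$, which is harmless since we take them large; the extreme $\gamma=\frac14$ occurs at $n_H=0$, and the restriction $\gamma\le\frac14$ reflects that the error-to-$\OPT$ ratio of a trap gadget is exactly $\frac14$). Then $\ALG(I)\le 2n_T+4n_H=\OPT(I)-2n_T=(1-2\gamma)\OPT(I)$, and scaling $n_T,n_H$ up at a fixed ratio drives $\OPT(I)\to\infty$ and absorbs the additive term.

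The crux — and the step I expect to be the main obstacle — is building the trap gadget so that it forces a deficit of a \emph{full factor of two} (matching the $2\eta$ in the analysis of $\TRUST$) while keeping $\OPT(\FP\cup\FN)$ down to $1$. The prediction should be chosen so that the gadget contains an opening request $R_0\in\IPRED$ lying outside all optimal solutions of $\IPRED$; the adversary reveals $R_0$ first and then branches on $\ALG$'s response. If $\ALG$ accepts $R_0$ (something $\TRUST$ would not do), the adversary withholds the predicted requests whose space $R_0$ now blocks — turning one of them into a false positive — and reveals one conflicting request that $\OPT$ can route, so that the net prediction discrepancy is a single overlapping false-positive/false-negative pair while $\ALG$ ends up two requests short of $\OPT$. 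If $\ALG$ rejects $R_0$, the adversary instead reveals a single conflicting false-negative request (which, being present in $I$ in place of a predicted request, simultaneously makes that predicted request a false positive) together with a continuation engineered so that, however $\ALG$ then plays, it again falls two requests short while the discrepancy is again one overlapping pair. A single shared edge can account for two units of loss only because the two overlapping requests $f^{+}$ and $f^{-}$ can leave that edge in \emph{different directions}, through two otherwise independent halves of the gadget; this is precisely why a tree (rather than a path, where Theorem~\ref{thm:generallower} gives only $1-\gamma$) is required. Verifying that \emph{every} branch of the adversary's decision tree realises a deficit of at least $2$ with error exactly $1$ — and in particular that $\TRUST$'s loss is exactly $2$, so the bound is tight — is the delicate part; the additive decomposition over gadgets and the scaling argument above then finish the proof, and since the graph used is a tree the same argument yields Corollary~\ref{cor:star}.
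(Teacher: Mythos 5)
Your overall architecture is exactly the paper's: a disjoint union of constant-size gadgets, a tunable number of which are ``traps'' that each contribute $1$ to $\eta=\OPT(\FP\cup\FN)$ (via a single overlapping false-positive/false-negative pair) while forcing any deterministic algorithm to fall $2$ short of \OPT on that gadget, the remainder being honestly predicted; the additive decomposition, the computation $\gamma=n_T/\OPT(I)$ ranging over $[0,\frac14]$, and the scaling argument that absorbs the additive constant are all correct and match the paper (which uses $p$ copies of the star $S_8$, of which $\ell$ are traps).

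The genuine gap is that the trap gadget itself is never constructed --- and that gadget \emph{is} the theorem. You assert the existence of a constant-size subtree, a prediction, and an adaptive adversary such that in \emph{every} branch of the algorithm's decision tree the error is exactly one overlapping $\FP$/$\FN$ pair and the deficit is at least $2$, but you only gesture at the mechanism: ``an opening request $R_0\in\IPRED$ lying outside all optimal solutions of $\IPRED$,'' followed by one binary branch. This sketch does not obviously close: after the single branch on $R_0$ you still have to control what the algorithm does with the \emph{remaining} predicted requests (it may hedge by rejecting everything early, or accept a request you intended as the replacement for a withheld one), and each of those continuations must separately be shown to cost $2$ while keeping $\OPT(\FP\cup\FN)=1$ --- in particular you must verify that the one false positive and the one false negative you introduce actually share an edge in every branch, and that \OPT can still route around both. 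The paper's gadget needs an $8$-leaf star, opens with \emph{two} conflicting pairs $\langle(2,3),(3,4),(6,7),(7,8)\rangle$, and requires a nested case analysis with six terminal cases precisely to handle all of these continuations; nothing in your sketch substitutes for that verification, and your stated design principle ($R_0$ outside all optimal solutions of $\IPRED$) does not even hold for the paper's opening requests, each of which lies in \emph{some} optimal solution of $\IPRED_w$. Until the gadget is exhibited and its case analysis checked, the proof reduces the theorem to an unproven claim of the same difficulty. (Two smaller points: your insistence that the gadget extract \emph{exactly} $2\gamma\OPT$ against \TRUST is unnecessary --- only the upper bound on \ALG is needed --- and your trap gadgets need not have $\OPT$ identically $4$; the paper allows $\OPT_i\in\{3,4\}$ per trap and recovers a dense set of achievable $\gamma$ by a small perturbation argument.)
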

\begin{proof}
  Let \ALG be any deterministic algorithm.
  We prove that, for any $0 \leq \gamma \leq \frac14$, there exists a
  graph, a set of predicted requests $\IPRED_{w}$, and a request
  sequence $I_w$ such that $\ALG(\IPRED_{w},I_{w}) \leq
  \big(1-2\ernormw\big)\OPT(I_{w})$, \ernormw is arbitrarily close to
  $\gamma$, and $\OPT(I_w)$ is arbitrarily large. More specifically,
  we prove the following:
  
For any $0 \leq \gamma \leq \frac14$ and $0 < \eps < \frac13$, there exists a graph, a set of predicted requests $\IPRED_{w}$, and a request sequence $I_w$
such that
\begin{enumerate}
\item $\ALG(\IPRED_{w},I_{w}) \leq \big(1-2\ernormw\big)\OPT(I_{w})$\,,
\item $| \ernormw - \gamma | \leq \eps$\,, and
\item $\OPT(I_w) \geq \frac{1}{\eps}$\,.
\end{enumerate}
Let $p = \CEIL{\frac{1}{3\eps}}$ and
consider a set of $p$ disjoint copies of the star $S_8$ consisting of a center vertex with $8$ neighbor vertices.
The prediction is fixed, but the input sequence depends on the algorithm's actions. 
Given that stars do not share edges between them, the total error and the algorithm's profit are summed over all stars.

For star $0\leq i\leq p-1$, the non-center vertices are numbered $8i+j$, where $1\leq j\leq 8$, but we refer to these vertices by the value $j$.
For each star, the prediction is
\[
\IPRED_w = \SET{
(1, 2), (2, 3), (3, 4), (4, 5), (6, 7), (7, 8)}\,.
\]
Note that $\OPT(\IPRED_w) = 3$.

For each star $i$, we let $\eta_i$, $\ALG_i$, and $\OPT_i$ denote the contribution of that star to $\eta(\IPRED_w,I_w)$, $\ALG(\IPRED_w,I_w)$, and $\OPT(I_w)$, respectively.
The adversary chooses an integer $\ell$ between $0$ and $p$ and constructs the input sequence $I_w$ such that the following hold:
\begin{itemize}
\item For the first $\ell$ stars,
  \begin{itemize}
  \item $\eta_i = 1$\,,
  \item $\OPT_i \in \{3,4\}$\,, and
  \item $\ALG_i \leq \OPT_i - 2\eta_i$\,.
  \end{itemize}
\item For the last $p-\ell$ stars, the input is equal to the prediction, so
  \begin{itemize}
  \item $\eta_i = 0$\,
  \item $\OPT_i = 3$\,, and
  \item $\ALG_i \leq \OPT_i - 2\eta_i$ (trivially, since $\eta_i=0$).
  \end{itemize}
\end{itemize}
This will result in 
\begin{align*}
  \ALG(\IPRED_w,I_w)
  & \leq \sum_{i=0}^{p-1} \left(\OPT_i - 2 \eta_i\right)\\
  & =  \OPT(I_w) - 2 \eta(\IPRED_w,I_w)\\
  & = \big(1 - 2 \ernormw\big)\OPT(I_w)
\end{align*}
and
\[ \OPT(I_w) \geq 3p \geq \frac{1}{\eps}\,, \]
proving Items~1 and~3.

We now explain how the adversary constructs the input sequence $I_w$.
For $0 \leq i \leq \ell-1$,
$I_w$ starts with $\bm{\langle (2,3),(3,4),(6,7),(7,8) \rangle}$.
Note that the path between $2$ and $3$ shares an edge with the path
between $3$ and $4$, so \ALG can accept at most one of the requests
$(2,3)$ and $(3,4)$. The same is true for the requests $(6,7)$ and $(7,8)$.
The rest of the input depends on the actions of \ALG, as outlined in
the cases below.

\begin{figure}[!t]
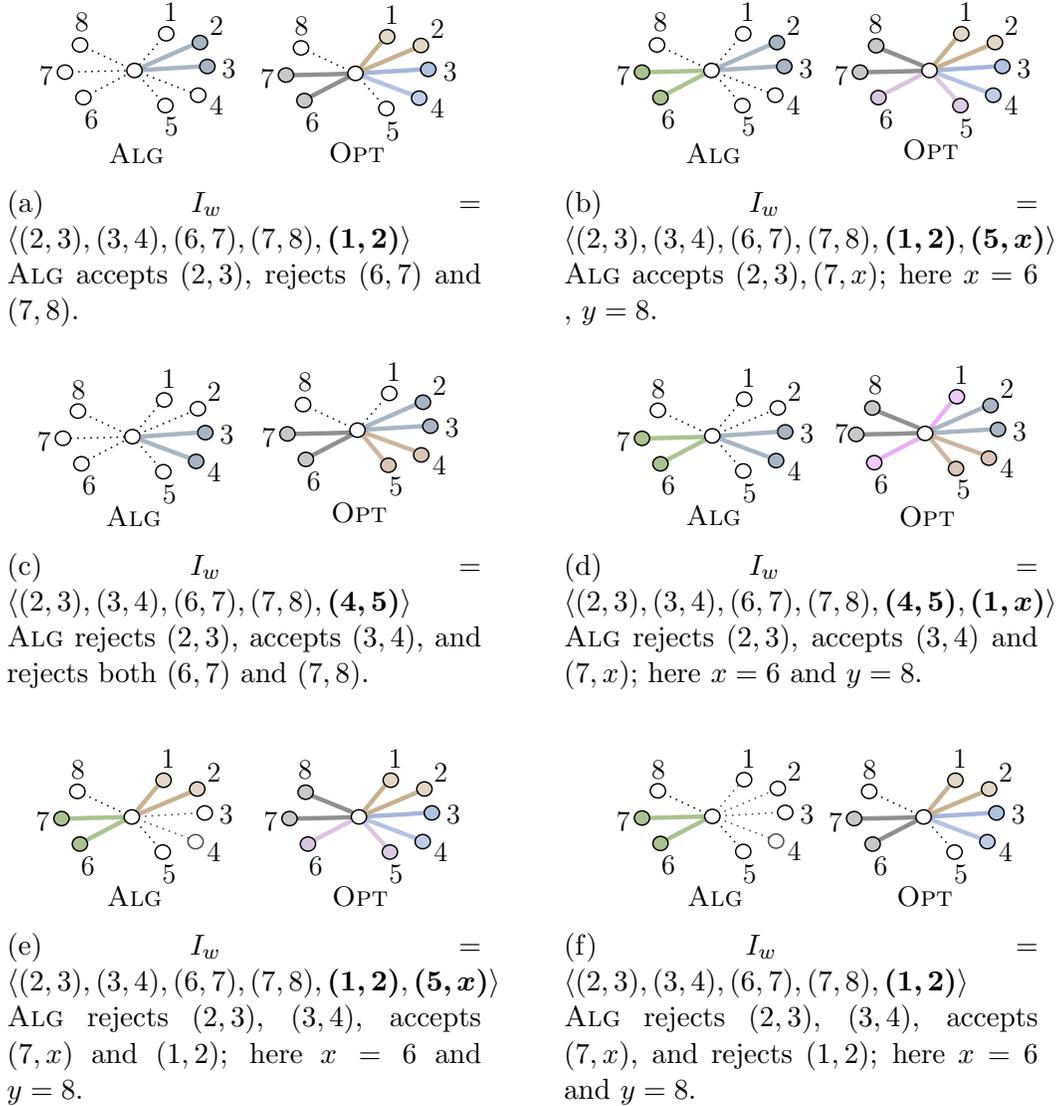

	\centering
	\begin{subfigure}[b]{0.46\textwidth}
		\centering
	\scalebox{.74}{\figStarOneA}
		\caption{$I_w = \langle (2,3),(3,4),(6,7),(7,8),\bm{(1,2)} \rangle$ \\\ALG accepts $(2,3)$, rejects $(6,7)$ and $(7,8)$.}
		\label{fig:staronea}
	\end{subfigure}\hfill \vspace*{3mm}
	\begin{subfigure}[b]{0.46\textwidth}	
		\hspace*{3mm}
\scalebox{.74}{\figStarOneB}
		\caption{$I_w = \langle (2,3),(3,4),(6,7),(7,8),\bm{(1,2)},\bm{(5,x)} \rangle$ \\ \ALG accepts $(2,3), (7,x)$; here $x=6$ , $y=8$.}
		\label{fig:staroneb}
	\end{subfigure}	
\vspace*{1mm}
	\begin{subfigure}[b]{0.46\textwidth}
	\centering
	\scalebox{.74}{\figStarTwoA}
	\caption{$I_w = \langle (2,3),(3,4),(6,7),(7,8),\bm{(4,5)} \rangle$ \\ \ALG rejects  $(2,3)$, accepts $(3,4)$, and rejects both $(6,7)$ and $(7,8)$.}
	\label{fig:startwoa}
\end{subfigure}\hfill
\begin{subfigure}[b]{0.46\textwidth}			\hspace*{3mm}
	\scalebox{.74}{\figStarTwoB}
	\caption{$I_w = \langle (2,3),(3,4),(6,7),(7,8),\bm{(4,5)},\bm{(1,x)} \rangle$ \\ \ALG rejects  $(2,3)$, accepts $(3,4)$ and $(7,x)$; here $x=6$ and $y=8$.}
	\label{fig:startwob}
\end{subfigure} \ \\ \ \\
\centering
\begin{subfigure}[b]{0.46\textwidth}
	\centering
	\scalebox{.74}{\figStarTwoC}
	\caption{$I_w = \langle (2,3),(3,4),(6,7),(7,8),\bm{(1,2)},\bm{(5,x)} \rangle$ \\ \ALG rejects $(2,3)$, $(3,4)$, accepts $(7,x)$ and $(1,2)$; here $x=6$ and $y=8$.}
	\label{fig:startwoc}
\end{subfigure}\hfill
\begin{subfigure}[b]{0.46\textwidth}			\hspace*{3mm}
	\scalebox{.74}{\figStarTwoD}
	\caption{$I_w = \langle (2,3),(3,4),(6,7),(7,8),\bm{(1,2)} \rangle$ \\ \ALG rejects $(2,3)$, $(3,4)$, accepts $(7,x)$, and rejects $(1,2$); here $x=6$ and $y=8$.}
	\label{fig:startwod}
\end{subfigure}	
\caption{Illustration of the proof of Theorem~\ref{th:lowerstar}. Highlighted edges indicate paths between accepted pairs.}

\end{figure}

\begin{description}
\item[Case \ALG accepts $\bm{(2,3)}$:]
  In this case, the next request to arrive is $\bm{(1,2)}$.
  \ALG cannot accept any of the requests $(1,2)$ and $(3,4)$.
  The predicted request $(4,5)$ does not arrive, i.e., $(4,5)$ is a false
positive. Note that there are no other false positives.
  \begin{description}
  \item[Subcase \ALG rejects both $\bm{(6,7)}$ and $\bm{(7,8)}$:]
In this case, no further requests arrive (see
Figure~\ref{fig:staronea}), so there are no false negatives.
Thus, $\OPT(\FN\cup\FP)=1$. Moreover, \OPT accepts three requests, $(1,2)$ and $(3,4)$
combined with $(6,7)$ or $(7,8)$, while \ALG accepts only
$(2,3)$.
\item[Subcase \ALG accepts $\bm{(6,7)}$ or $\bm{(7,8)}$:]
We let $\{x,y\}=\{6,8\}$ such that \ALG accepts $(7,x)$ and rejects $(7,y)$.
Now, a false
negative, $\bm{(5,x)}$, arrives (see Figure~\ref{fig:staroneb}). Since
$(5,x)$ shares an edge with the false positive $(4,5)$, $\OPT(\FN\cup\FP)=1$.
$\ALG$ accepts $\{ (2,3),(7,x)\}$, and \OPT accepts $\{ (1,2),(3,4),(5,x),
(7,y)\}$. 
  \end{description}
In both subcases, $\eta_i=1$, $\OPT_i \in \{3,4\}$, and $\ALG_i = \OPT_i-2 = \OPT_i - 2\eta_i$.

\item[Case \ALG rejects $\bm{(2,3)}$:] \mbox{}

  \begin{description}
  \item[Subcase \ALG accepts $\bm{(3,4)}$:]
The next request to arrive is $\bm{(4,5)}$, which the algorithm cannot accept. The request $(1,2)$ does not arrive, so
it is a false positive.
    \begin{description}
    \item[Subsubcase \ALG rejects both $\bm{(6,7)}$ and $\bm{(7,8)}$:]
    In this case, no further requests arrive
    (see Figure~\ref{fig:startwoa}).
    Thus, \ALG accepts only $(3,4)$, \OPT accepts $(2,3)$ and $(4,5)$
    together with $(6,7)$ or $(7,8)$, and
    $\OPT(\FN\cup\FP)=1$.

    \item[Subsubcase \ALG accepts $\bm{(6,7)}$ or $\bm{(7,8)}$:]
As above, we define $\{x,y\} = \{6,8\}$ such that \ALG accepts $(7,x)$ and rejects $(7,y)$.
Now, a false
negative, $\bm{(1,x)}$, arrives  (see Figure~\ref{fig:startwob}). Since $(1,2)$ and $(1,x)$ share an edge, $\OPT(\FN\cup\FP)=1$.
Moreover, $\ALG$ accepts $\{ (3,4),(7,x)\}$, and \OPT accepts the set $\{ (1,x),(2,3),(4,5),
(7,y)\}$.
    \end{description}
  In both subsubcases, $\eta_i=1$, $\OPT_i \in \{3,4\}$, and $\ALG_i = \OPT_i-2 = \OPT_i - 2\eta_i$.

  \item[Subcase \ALG rejects $\bm{(3,4)}$:]
    The next request to arrive is $\bm{(1,2)}$.
    The request $(4,5)$ is a false positive.
    \begin{description}
    \item[Subsubcase \ALG accepts $\bm{(6,7)}$ or $\bm{(7,8)}$:]
As above, we define $\{x,y\} = \{6,8\}$ such that \ALG accepts $(7,x)$ and rejects $(7,y)$.

      \begin{description}
      \item[Subsubsubcase \ALG accepts $\bm{(1,2)}$:]
The final request is
a false
negative, $\bm{(5,x)}$  (see Figure~\ref{fig:startwoc}).
Since $(4,5)$ and $(5,x)$ share an edge, $\OPT(\FN\cup\FP)=1$.
$\ALG$ accepts $\{ (1,2),(7,x)\}$ and \OPT accepts
$\{ (1,2),(3,4),(5,x),(7,y)\}$.

      \item[Subsubsubcase \ALG rejects $\bm{(1,2)}$:]
        In this case, no further requests arrive (see Figure~\ref{fig:startwod}).
        Thus, $\ALG$ accepts $\{ (7,x)\}$, \OPT accepts
$\{ (3,4),(5,x),(7,y)\}$, and
$\OPT(\FN\cup\FP)=1$. 
      \end{description}
      In both subsubsubcases, $\eta_i=1$, $\OPT_i \in \{3,4\}$, and $\ALG_i = \OPT_i-2 = \OPT_i - 2\eta_i$.
    \item[Subsubcase \ALG rejects both $\bm{(6,7)}$ and $\bm{(7,8)}$:]
      In this case, no further requests arrive.
      Thus, the profit of \ALG is $1$ if it
      accepts $(1,2)$ and $0$ otherwise, while \OPT accepts
      $(1,2)$ and $(3,4)$ together with $(6,7)$ or $(7,8)$, 
and
$\OPT(\FN\cup\FP)=1$. 
Thus, $\eta_i=1$, $\OPT_i=3$, and $\ALG_i \leq \OPT_i-2 = \OPT_i - 2\eta_i$.
    \end{description}
  \end{description}
\end{description}

This completes the proof of Items~1 and~3.
For Item~2, note that
\begin{itemize}
\item $\ernormw = 0$, for $\ell=0$, and
\item $\displaystyle \frac14 \leq \ernormw \leq \frac13$, for $\ell=p$\,.
\end{itemize}
Also note that incrementing $\ell$ by one increases $\eta(\IPRED_w,I_w)$ by $1$ and does not decrease $\OPT(I_w)$.
Thus, since $\OPT(I_w) \geq 1/\eps$, incrementing $\ell$ adds at most
\[ \Delta_{\gamma} \leq \frac{1}{1/\eps} = \eps \]
to \ernormw. This proves that for any $0 \leq \gamma \leq \frac14$, the adversary can choose $\ell$ such that
\[ | \ernormw - \gamma | \leq \eps \,.\]
\end{proof}

\begin{corollary}
\label{cor:star}
Any deterministic algorithm for the online disjoint path problem on star
graphs with arbitrarily high degree or trees with arbitrarily many
vertices of degree at least~$8$ has competitive ratio at most
$1-2\gamma$, $1 \leq \gamma \leq \frac14$.
\end{corollary}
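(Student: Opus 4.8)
The plan is to notice that the lower-bound instance built in the proof of Theorem~\ref{th:lowerstar} is, after a trivial graph surgery, an instance on each of the two claimed graph classes, and that its analysis survives the surgery verbatim. Recall that that instance is a disjoint union of $p = \CEIL{\frac{1}{3\eps}}$ copies of the star $S_8$; every request issued inside a copy joins two leaves of that copy and is therefore realized by the unique two-edge path through that copy's center. Hence requests in different copies never share an edge, the adversary acts independently copy by copy, and $\eta$, $\ALG$, and $\OPT$ all decompose as sums over copies with $\ALG_i \leq \OPT_i - 2\eta_i$ for every $i$. The corollary will follow by re-running this argument on a modified underlying graph.

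For star graphs of arbitrarily high degree, I would identify the $p$ centers into a single center vertex while keeping the $8p$ leaves distinct, letting the $i$-th block of eight leaves play the role of copy $i$. Since disjoint path allocation only forbids \emph{edge} overlaps, and the (still unique) path between two leaves of block $i$ uses exactly the two leaf-to-center edges of block $i$, the blocks remain pairwise edge-disjoint after the identification; accordingly $\OPT$ and $\eta$ still split blockwise, and the adversary's per-block construction together with $\ALG_i \leq \OPT_i - 2\eta_i$ is untouched. Letting $\eps \to 0$ makes the degree $8p$ of this single star arbitrarily large, so Items~1--3 of the proof of Theorem~\ref{th:lowerstar} apply and give $\ALG(\IPRED_w,I_w) \leq \big(1-2\ernormw\big)\OPT(I_w)$ with \ernormw arbitrarily close to $\gamma$ and $\OPT(I_w)$ arbitrarily large.

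For trees, I would instead keep the $p$ copies of $S_8$ disjoint and add $p-1$ new edges joining the centers into a path, turning the forest into a tree in which each of the $p$ original centers has degree at least~$8$. Because a tree has a unique path between any two vertices, each request between two leaves of a copy is still forced to stay inside that copy, so the new connecting edges are never used and the blockwise analysis is unaffected. As $\eps \to 0$ the number $p$ of degree-$\geq 8$ vertices grows without bound, giving the statement for trees with arbitrarily many such vertices.

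The only point requiring care — and the step I would treat as the crux — is exactly this edge-disjointness bookkeeping: one must check that merging the centers does not let a path from one block borrow an edge of another, and that inserting the connecting edges does not open an alternative routing for any request. Both are immediate from the facts that every request joins two leaves of a single block and that paths in trees are unique, after which the conclusion is literally that of Theorem~\ref{th:lowerstar} read block by block, for all $0 \leq \gamma \leq \frac14$.
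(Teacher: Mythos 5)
Your proposal is correct and matches the paper's argument: the paper likewise reuses the construction from Theorem~\ref{th:lowerstar} and observes that the $p$ stars can be merged into a single connected graph --- identifying all centers for the high-degree star, or adding connecting edges for the tree case (the paper joins designated leaves rather than the centers, but this is immaterial) --- with the analysis surviving because paths need only be edge-disjoint and paths in trees are unique. The edge-disjointness bookkeeping you flag as the crux is exactly the one-line justification the paper gives.
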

\begin{proof}
The stars used in the proof of Theorem~\ref{thm:star} can be connected in several ways to form a connected graph. For instance, the proof still holds, if the vertex $8i+1$ is connected by an edge to the vertex $8(i+1)+2$, $0 \leq i \leq p-2$, or if all center vertices are identified resulting in a star of one center vertex with $8p$ neighbors.
This is true, since the paths are only required to be edge disjoint, not vertex disjoint. 
\end{proof}

\section{Interval Scheduling}
In this section, we first show tight upper and lower bounds on the competitive ratio of deterministic algorithms for interval scheduling.
We then study trade-offs between consistency and robustness for
deterministic and randomized algorithms and present some experimental
results.
Recall that $m$ denotes the number of edges on the given path.

\subsection{A General Upper Bound for Deterministic Algorithms}

As an introduction to the difficulties in designing algorithms for the problem,
we start by proving a general lower bound.
We show that for $0 < \gamma < 1$, no deterministic
algorithm can have a competitive ratio better than $1-\gamma$:
\begin{theorem}
  \label{thm:generallower}
  For any $0 \leq \gamma \leq 1$, the competitive ratio of any
  deterministic algorithm for the online interval scheduling problem is at
  most $1-\gamma$.
\end{theorem}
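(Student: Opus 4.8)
The plan is to reuse the adversarial template behind Theorem~\ref{th:lowerstar}. Fix an arbitrary deterministic $\ALG$, a target $\gamma^{\star}\in[0,1]$, and a tolerance $\eps>0$. I will produce a path graph, a prediction $\IPRED$, and a request sequence $I$ with $\OPT(I)\ge 1/\eps$, $|\gamma(\IPRED,I)-\gamma^{\star}|\le\eps$, and $\ALG(\IPRED,I)\le\bigl(1-\gamma(\IPRED,I)\bigr)\OPT(I)$; these three facts give the stated bound on the competitive ratio. The construction is assembled from independent single-block gadgets placed on disjoint stretches of edges, and the adversary will simulate the deterministic $\ALG$ in order to decide which gadget to play on each block, so that the per-block quantities $\ALG_i,\OPT_i,\eta_i$ satisfy $\ALG_i\le\OPT_i-\eta_i$ and sum to the desired totals.

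The basic ``flood'' gadget lives on $N$ edges: let $B$ span all of them, let $u_1,\dots,u_N$ be the unit subintervals, set $k=\CEIL{\gamma^{\star}N}$, and put $\{B,u_1,\dots,u_{N-k}\}$ into the prediction. The input reveals $B$ first. If $\ALG$ accepts $B$, release $u_1,\dots,u_N$, all blocked by $B$; then $\ALG_i=1$, $\OPT_i=N$, $\FP=\emptyset$, $\FN=\{u_{N-k+1},\dots,u_N\}$ is $k$ pairwise disjoint intervals, so $\eta_i=k$ and $\ALG_i=1\le N-k=\OPT_i-\eta_i$ (using $k\le N-1$), while $N$ large makes $k/N$ within $\eps$ of $\gamma^{\star}$. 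If $\ALG$ rejects $B$, release only $u_1,\dots,u_{N-k}$, so the block's input equals its prediction, $\eta_i=0$, $\OPT_i=\ALG_i=N-k$. The complementary ``forfeit'' gadget has prediction $\{B,g\}$ with $B$ spanning the block and $g$ a unit interval on a tiny separate block; the input again reveals $B$ first; if $\ALG$ accepts $B$ the adversary also releases $g$, giving a clean block ($\eta_i=0$, $\OPT_i=\ALG_i=2$), and if $\ALG$ rejects $B$ the adversary reveals nothing further, so $g$ is a false positive, $\eta_i=1$, $\OPT_i=1$, $\ALG_i=0$, again $\ALG_i\le\OPT_i-\eta_i$. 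An algorithm that accepts long intervals is thus bled by flood gadgets (which realize every error rate $k/N$), whereas one that declines them is bled by forfeit gadgets (error rate $1$ per block) diluted by error-free blocks of unit intervals to realize every intermediate rate; mixed behavior is handled by blending the two. Replicating the chosen blend makes $\OPT(I)$ arbitrarily large, and since adding one ``lossy'' block changes $\eta(\IPRED,I)$ by a bounded amount without decreasing $\OPT(I)$, the adversary can land $\gamma(\IPRED,I)$ within $\eps$ of $\gamma^{\star}$, exactly as in the closing interpolation step of the proof of Theorem~\ref{th:lowerstar}.

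The delicate point, which I expect to be the main obstacle, is uniformity: the prediction must be fixed before $\ALG$ runs, yet it has to already contain whichever gadgets the simulation of $\ALG$ will turn into lossy blocks, so one must argue that for a suitably chosen fixed prediction the adversary's simulation really does produce a blend whose error rate can be pushed to $\gamma^{\star}$ and whose per-block inequalities combine to $\ALG(\IPRED,I)\le\OPT(I)-\eta(\IPRED,I)=\bigl(1-\gamma(\IPRED,I)\bigr)\OPT(I)$. This amounts to checking, in every case of $\ALG$'s behavior, that the released false positives and false negatives have optimum exactly $\OPT(\FP\cup\FN)$, so that $\eta$ is not overcounted and $\gamma$ does not overshoot the target; once that is in hand, the final inequality and the interpolation are routine.
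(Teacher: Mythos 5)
Your framing (produce instances with $\OPT(I_w)$ unbounded, $\gamma(\IPRED_w,I_w)$ within $\eps$ of the target, and $\ALG\le(1-\gamma)\OPT$) matches the paper, and your per-gadget inequalities are each verified correctly. But the composition step that you defer as ``the main obstacle'' is not a routine check --- it is where the construction fails. Your two gadget types have different, visible predictions (a flood block predicts $B$ together with $N-k\ge 2$ unit sub-intervals; a forfeit block predicts $B$ together with a single unit interval on a separate stretch), and the algorithm sees all of $\IPRED$ before the first request. Consider the deterministic algorithm that rejects the spanning interval of every block whose predicted trace is of flood type, accepts the spanning interval of every block whose predicted trace is of forfeit type, and otherwise accepts greedily. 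Against it, every flood block and every forfeit block lands in your clean branch, so $\eta_i=0$ on all blocks and $\gamma=0$ for \emph{every} blend of the two types; no instance in your family realizes a target $\gamma^{\star}>0$. The adversary cannot improvise around this: in the reject branch of a flood block, withholding $j\ge 1$ predicted unit intervals adds $j$ to $\eta_i$ while removing $j$ from $\OPT_i$ and leaving the greedy algorithm free to collect everything released, so $\ALG_i\le\OPT_i-\eta_i$ breaks; releasing extra unpredicted unit intervals helps the algorithm as much as \OPT. Symmetrically, flooding a forfeit block after $B$ has been accepted makes every new unit interval a false negative, so $\OPT_i-\eta_i$ does not grow while the algorithm keeps the point it earned from $B$, and the inequality breaks again.

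The repair is to use a single gadget whose one fixed prediction punishes both responses, which is exactly what the paper does: each phase predicts the length-$c$ interval $(ci,c(i+1))$ \emph{and exactly one} unit sub-interval $(ci,ci+1)$. If the algorithm accepts the long interval, the adversary floods with all $c$ unit intervals ($c-1$ of them false negatives), giving $\ALG_i=1=\OPT_i-\eta_i$ with $\OPT_i=c$ and $\eta_i=c-1$; if it rejects, the adversary stops, the predicted unit interval becomes a false positive, and $\ALG_i=0=\OPT_i-\eta_i$ with $\OPT_i=\eta_i=1$. Both branches are lossy with local error ratio close to $1$, so the intermediate values of $\gamma$ are obtained not by tuning $k$ inside a block (as in your flood gadget, which only bites if the algorithm cooperates by accepting) but by choosing how many phases are played adversarially versus exactly as predicted --- the discrete interpolation you already carry out at the end, which is sound. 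In short: your summation and interpolation are fine, but the gadget itself must be redesigned before they apply.
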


\begin{proof}
  Let $\ALG$ be any deterministic algorithm.
  We prove that, for any $0 \leq \gamma \leq 1$, there exists a path graph, a set of predicted requests $\IPRED_{w}$, and a request
  sequence $I_w$ such that $\ALG(\IPRED_{w},I_{w}) \leq
  \big(1-\ernormw\big)\OPT(I_{w})$, \ernormw is arbitrarily close to
  $\gamma$, and $\OPT(I_w)$ is arbitrarily large.
  More specifically, we prove the following:
  
For any $0 \leq \gamma \leq 1$ and $0 < \eps < 1$,
there is an input sequence $I_w$ and a set of predictions $\IPRED_w$ such that
\begin{enumerate}
  \item $\displaystyle \ALG(\IPRED_w,I_w) \leq \big(1 - \ernormw\big)\OPT(I_w)$\,,
  \item $|\ernormw - \gamma| \leq \eps$\,, and
  \item $\displaystyle \OPT(I_w) \geq \frac{1}{\eps^2}$\,.
\end{enumerate}
For any $0 < \eps < 1$, let $c = \CEIL{\frac{1}{\eps}}$ and  $p = \CEIL{\frac{1}{\eps^2}}$.
The prediction consists of $2p$ requests:
\[\IPRED_w = \bigcup_{i=0}^{p-1} \Big\{\big(c i, c (i+1) \big), \big(c i, c
i+1\big)\Big\}\,.\]
The input $I_w$ is formed by $p$ phases, $0 \leq i \leq p-1$.
For each phase $i$, we let $\eta_i$, $\ALG_i$, and $\OPT_i$ denote the contribution of that phase to $\eta(\IPRED_w,I_w)$, $\ALG(\IPRED_w,I_w)$, and $\OPT(I_w)$, respectively.
The adversary chooses an integer $\ell$ between $0$ and $p-1$ and does the following.
\begin{itemize}
\item For $0 \leq i \leq \ell-1$, the
$i$th~phase starts with the true positive $(c i,c (i+1))$, and the
remainder of the phase depends on whether the algorithm accepts this
request or not.
\begin{description}
\item[Case \ALG accepts $\bm{(c i, c (i+1))}$:]
  In this case, the phase
continues with
\[\SETOF{(c i+j, c i+(j+1))}{0\leq j\leq c-1}\,.\]
The first of these requests is a true positive, and the other $c-1$ are
false negatives.

Note that $\ALG$ cannot accept any of these $c$
requests. The optimal algorithm rejects the
original request $(c i, c (i+1))$ and accepts all of the $c$
following unit-length requests.
We conclude that
\begin{itemize}
  \item $\eta_i=c-1$,
  \item $\OPT_i=c$, and
  \item $\ALG_i = 1 = \OPT_i-\eta_i$.
\end{itemize}

\item [Case \ALG rejects $\bm{(c i, c (i+1))}$:]
  In this case, the phase ends
with no further requests. Thus, $(c i, c i+1)$
is a false positive, and we obtain
\begin{itemize}
  \item $\eta_i=1$,
  \item $\OPT_i=1$, and
  \item $\ALG_i = 0 = \OPT_i-\eta_i$.
\end{itemize}

\end{description}

\item For $\ell \leq i \leq p-1$, the $i$th phase consists of the true positives $(c i, c (i+1))$ and $(c i, c
i+1)$. Thus,
\begin{itemize}
  \item $\eta_i=0$,
  \item $\OPT_i=1$, and
  \item $\ALG_i \leq 1 = \OPT_i-\eta_i$
\end{itemize}
\end{itemize}

Since the intervals in
$\FP \cup \FN$ are disjoint, we can write
\[ \eta(\IPRED_w,I_w) = \OPT(\FP\cup \FN) = \sum_{i=0}^{\ell-1} \eta_i \,.\]
Thus,
\begin{align*}
  \ALG(\IPRED_w,I_w)
  & = \sum_{i=0}^{p-1} \ALG_i \\
  & \leq \sum_{i=0}^{p-1} \left(\OPT_i - \eta_i\right)\\
  & =  \OPT(I_w) -  \eta(\IPRED_w,I_w)\\
  & = \big(1 - \ernormw\big)\OPT(I_w)\,.
\end{align*}  
This proves Item~1.

Since \OPT accepts at least one interval in each phase,
\[ \OPT(I_w) \geq p \,, \]
which proves Item~3.

Finally, for Item~2, note that
\begin{itemize}
\item $\ernormw=0$, for $\ell=0$, and
\item $\displaystyle \frac{c-1}{c} \leq \ernormw \leq 1$, for $\ell=p$,
\end{itemize}
where
\[ 1-\frac{c-1}{c} = \frac1c = \frac{1}{\CEIL{\frac{1}{\eps}}} \leq \eps  \,. \]
Also note that incrementing $\ell$ by one, increases $\eta(\IPRED_w,I_w)$ by $1$ or $c-1$ and does not decrease $\OPT(I_w)$.
Thus, incrementing $\ell$ adds at most
\[ \Delta_{\gamma} = \frac{c-1}{p} < \frac{\frac{1}{\eps}}{\CEIL{\frac{1}{\eps^2}}} \leq \eps \]
to \ernormw.
This proves that for any $0 \leq \gamma \leq 1$, the adversary can choose $\ell$ such that
\[ | \ernormw - \gamma | < \eps \,. \]
\end{proof}

\subsection{\TRUST}

The next theorem gives an upper bound on the competitive ratio of
\TRUST which is lower than the general upper bound of
Theorem~\ref{thm:generallower}.
The proof uses an adversarial sequence similar to that of Theorem~\ref{thm:generallower}. 

\begin{theorem}
\label{th:intervaltrustlower}
For the online interval scheduling problem, the competitive ratio of \TRUST is at most $1-2\gamma$.
\end{theorem}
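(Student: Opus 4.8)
The plan is to complement Theorem~\ref{th:trustCompetitive} with a matching negative result in the style of the proof of Theorem~\ref{thm:star}: for every target $0\le\gamma\le\tfrac12$ and every $\eps>0$, I would construct a path graph, a prediction $\IPRED_w$, and a request sequence $I_w$ such that $\TRUST(\IPRED_w,I_w)\le(1-2\ernormw)\OPT(I_w)$, $|\ernormw-\gamma|\le\eps$, and $\OPT(I_w)$ is arbitrarily large. As in Theorem~\ref{thm:generallower}, I would cut the path into $p=\CEIL{1/\eps^2}$ edge-disjoint length-$3$ blocks, one per phase $i$ ($0\le i\le p-1$), with phase $i$ spanning vertices $3i,\dots,3i+3$. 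In every phase the prediction is the same pair of mutually overlapping intervals, $\{(3i,3i+2),(3i+1,3i+3)\}$ (they share the middle edge), so the restriction of $\OPT$ to a block is $1$, and the optimal solution $\IOPT$ that \TRUST commits to is forced to contain exactly one of the two predicted intervals in each block.

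The crucial point — and the reason \TRUST only achieves $1-2\gamma$ rather than the $1-\gamma$ of Theorem~\ref{thm:generallower} — is that the adversary fixes $\IPRED_w$ first, then reads off \TRUST's committed set $\IOPT$, and only then builds $I_w$, turning \TRUST's committed interval in a block into a false positive while slipping a profitable false negative into the edge it vacates. Concretely, pick $0\le\ell\le p$; for $i<\ell$ (the ``bad'' phases), if $\IOPT$ selected $(3i,3i+2)$ in block $i$, let phase $i$ of $I_w$ be $(3i+1,3i+3)$ followed by the unit interval $(3i,3i+1)$ — the former is a true positive that \TRUST must reject, since it overlaps \TRUST's committed interval and hence is not in $\IOPT$, and the latter is a false negative; if $\IOPT$ selected $(3i+1,3i+3)$ instead, use the mirror image with $(3i,3i+2)$ and $(3i+2,3i+3)$. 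Either way the two intervals of the phase are edge-disjoint, so the phase contributes $2$ to $\OPT(I_w)$; \TRUST accepts nothing in the phase; and the vacated predicted interval and the new unit interval share an edge, so the phase contributes exactly $1$ to $\OPT(\FP\cup\FN)$. For $i\ge\ell$, phase $i$ of $I_w$ equals the prediction, contributing $1$ to $\OPT(I_w)$, $0$ to the error, and $1$ to \TRUST.

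Summing over the $p$ disjoint blocks gives $\OPT(I_w)=p+\ell$, $\eta(\IPRED_w,I_w)=\OPT(\FP\cup\FN)=\ell$ (the error sets lie in disjoint blocks, so the error is additive), and $\TRUST(\IPRED_w,I_w)=p-\ell=(1-2\ernormw)\OPT(I_w)$, which establishes the bound with equality. For the remaining bookkeeping: $\OPT(I_w)=p+\ell\ge p\ge 1/\eps^2$ is arbitrarily large; the achievable values $\ernormw=\ell/(p+\ell)$, $0\le\ell\le p$, range over $[0,\tfrac12]$ with consecutive values at most $1/p\le\eps$ apart (incrementing $\ell$ raises $\eta$ by $1$ without decreasing $\OPT$), so $\ell$ can be chosen with $|\ernormw-\gamma|\le\eps$; and letting $p\to\infty$ then shows \TRUST is not $c$-competitive for any $c>1-2\gamma$.

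I do not expect a serious obstacle: unlike Theorem~\ref{thm:generallower}, there is no online game against an adaptive algorithm, since \TRUST's behaviour is completely determined by $\IPRED_w$ (the adversary merely inspects $\IOPT$ block by block). The one thing that must be verified with care is that the length-$3$ gadget genuinely forces $\TRUST_i=0$ in a bad phase — that is, that the single interval \TRUST commits to there both fails to arrive and overlaps the one predicted interval that does arrive, leaving \TRUST unable to accept anything — which is exactly what the overlapping pair $\{(3i,3i+2),(3i+1,3i+3)\}$ together with a boundary unit interval is designed to guarantee, and which is what yields the extra factor of two over the general deterministic bound.
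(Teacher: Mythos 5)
Your construction is exactly the paper's: the same prediction $\bigcup_i\{(3i,3i+2),(3i+1,3i+3)\}$ on edge-disjoint length-$3$ blocks, the same adversary that inspects which interval of each pair $\IOPT$ committed to and then delivers the other one plus an overlapping unit false negative in $\ell$ of the blocks, and the same accounting $\OPT(I_w)=p+\ell$, $\eta=\ell$, $\TRUST=p-\ell$ with the same choice of $\ell$ to hit any $\gamma\in[0,\tfrac12]$ within $\eps$. The argument is correct and matches the paper's proof in all essentials (the only cosmetic difference being your choice of $p=\CEIL{1/\eps^2}$ versus the paper's $\CEIL{1/\eps}$).
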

\begin{proof}
  We prove that, for any $0 \leq \gamma \leq \frac12$, there exists a path
  graph, a set of predicted requests $\IPRED_{w}$, and a request
  sequence $I_w$ such that $\TRUST(\IPRED_{w},I_{w}) \leq
  \big(1-2\ernormw\big)\OPT(I_{w})$, \ernormw is arbitrarily close to
  $\gamma$, and $\OPT(I_w)$ is arbitrarily large. More specifically,
  we prove the following:
  
For any $0 \leq \gamma \leq \frac12$ and $0 < \eps \leq 1$, there exists a path graph, a set of predicted requests $\IPRED_{w}$, and a request sequence $I_w$
such that
\begin{enumerate}
\item $\TRUST(\IPRED_{w},I_{w}) \leq \big(1-2\ernormw\big)\OPT(I_{w})$\,,
\item $| \ernormw - \gamma | \leq \eps$\,, and
\item $\OPT(I_w) \geq \frac{1}{\eps}$\,.
\end{enumerate}

For any $0 < \eps \leq 1$, let $p = \CEIL{\frac{1}{\eps}}$.
	Let the prediction be a set of length-$2$ intervals:
	\[\IPRED_w = \bigcup_{i=0}^{p-1} \big\{(3i,3i+2), (3i+1,3i+3)\big\}\,.\]
	\TRUST chooses an optimal solution \IOPT from $\IPRED_w$. For each $i$,
	\IOPT will contain either $(3i, 3i+2)$ or $(3i+1, 3i+3)$.
        The adversary chooses an integer $\ell$ between $0$ and $p$.
        \begin{itemize}
        \item For $0 \leq i \leq \ell-1$, the adversary does the following.
        
	If $(3i, 3i+2)$ is in \IOPT, that interval will be in \FP, and
	\OPT will select $(3i+1, 3i+3)$, which will be a \TP-interval.
	Further, $I_w$ will contain the \FN-interval, $(3i, 3i+1)$.
	
	If, instead, $(3i+1, 3i+3)$ is in \IOPT, that interval will be in \FP, and
	\OPT will select $(3i, 3i+2)$, which will be a \TP-interval.
	Further, $I_w$ will then contain the \FN-interval, $(3i+2, 3i+3)$.

        \item For $\ell \leq i \leq p$, the adversary gives the two
          predicted intervals, $(3i, 3i+2)$ and $(3i+1, 3i+3)$.

        \end{itemize}

	Thus, \[ \OPT(I_w)=2 \ell + (p-\ell) = p+\ell \geq
        \frac{1}{\eps}\,,\]
        proving Item~3.

        For each $i < \ell$, the interval in \FP and the interval in
	\FN overlap, so \[ \error(\hat{I}_w,I_w) = \OPT(\FN \cup \FP) = \ell\,. \] 
        Since the first $\ell$ intervals in $I^*$ are false positives,
\begin{align*}
  \TRUST(\IPRED_w, I_w)
  & = p-\ell\\
  & = \OPT(I_w) - 2 \ell \\
  & = \OPT(I_w) - 2\error(\hat{I}_w,I_w) \\
  & =  \big(1-2\ernormw\big)\OPT(I_w)\,,
\end{align*}
proving Item~1.

For Item~2, note that $\ernormw = 0$, for $\ell=0$, and $\ernormw = p/2p = 1/2$, for $\ell=p$.
Also note that incrementing $\ell$ by one increases $\eta(\IPRED_w,I_w)$ by~$1$ and does not decrease $\OPT(I_w)$.
Thus, since $\OPT(I_w) \geq 1/\eps$, incrementing $\ell$ adds at most
\[ \Delta_{\gamma} \leq \frac{1}{1/\eps} = \eps \]
to~\ernormw.
This proves that for any $0 \leq \gamma \leq \frac12$, the adversary can choose $\ell$ such that
\[ | \ernormw - \gamma | \leq \eps \,.\]
\end{proof}
In light of Theorems~\ref{th:trustCompetitive} and~\ref{th:intervaltrustlower}, the competitive ratio of \TRUST for interval scheduling is $1-2\gamma$. Meanwhile, Theorem~\ref{thm:generallower} gives an upper bound of $1-\gamma$ on the competitiveness of deterministic algorithms. This gap suggests potential for improvement, which we explore in the next section.

\subsection{\TRUSTGREEDY}
In this section, we introduce an algorithm \TRUSTGREEDY, which achieves an optimal competitive ratio for interval scheduling. 

\subsubsection{The algorithm.}

\TRUSTGREEDY starts by choosing an optimal solution, $\IOPT$, from the predictions
in \IPRED.
This optimal offline solution is selected by repeatedly including
an interval that ends earliest possible
among those in $\IPRED$ that do not overlap any
already selected intervals.

During the online processing after this initialization,
\TRUSTGREEDY maintains an updated plan, $A$.
Initially, $A$ is \IOPT.
When a request, $r$, is contained in $A$, it is accepted.
When a request, $r$, is in~\FN, \TRUSTGREEDY accepts if $r$ overlaps no
previously accepted intervals and can be accepted by
replacing at most one other interval in $A$ that ends no earlier than~$r$.
In that case, $r$ is added to $A$, possibly replacing an overlapping interval
to maintain the feasibility of~$A$ (no two intervals overlap).

As a comment, only the first interval from \FN that replaces an interval~$r$
in the current~$A$ is said to ``replace'' it. There may be other intervals
from \FN that overlap~$r$ and are accepted by \TRUSTGREEDY, but they are not said to
``replace'' it. Figure~\ref{fig:tgExample} provides an illustration.

\begin{figure}
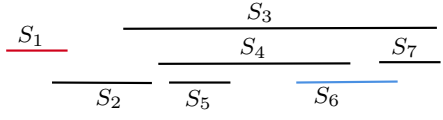

\centering
\scalebox{1.1}{\tgExample}
\caption{An illustration of \TRUSTGREEDY: Suppose $I = \langle S_1, S_2, S_3, S_4, S_5, S_7 \rangle$ and $\IPRED = (I \setminus {S_1}) \cup {S_6}$, so that $S_1$ is a false negative and $S_6$ is a false positive. The optimal solution for $\IPRED$ constructed by \TRUSTGREEDY is $I^* = \{S_2, S_5, S_6\}$. Initially, the set $A$ is initialized as $I^*$. Upon the arrival of $S_1$, it replaces $S_2$, and $A$ becomes $\{S_1, S_5, S_6\}$. Eventually, the algorithm accepts $\{S_1, S_5\}$.\label{fig:tgExample}}
\end{figure}

\subsubsection{Analysis.}

Let $\TGI$ denote the set of intervals chosen by \TRUSTGREEDY on input~$I$
and prediction~\IPRED, and \OPTI the intervals chosen by some optimal offline algorithm.
We define the following subsets of \TG and \OPT:
\begin{itemize}
\item $\TGTP = \TG \cap \IPRED = \TG \cap \TP$ \\ $\OPTHAT = \OPTI
\cap \IPRED = \OPTI \cap \TP$
\item $\TGFN = \TGI \cap \FN$ \\ $\OPTFN = \OPTI \cap \FN$ 
\end{itemize}
Note that since $I = \TP \cup \FN$,
  \[ \TG = \TGTP \cup \TGFN \text{ and } \OPT = \OPTHAT \cup \OPTFN\,. \]

In the example of Figure~\ref{fig:tgExample},
we have $\OPT = \{S_1,S_5,S_7\}$, $\TG = \{S_1,S_5\}$, $\TGTP = \{S_5\}$, $\OPTHAT = \{S_5,S_7\}$, $\TGFN = \{S_1\}$ and $\OPTFN = \{S_1\}$.

\begin{lemma}
\label{lemma:IOPT}
Each interval $i \in \OPTHAT$ overlaps an interval in \IOPT extending no
further to the right than $i$.
\end{lemma}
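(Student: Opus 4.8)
The plan is to prove a slightly more general statement: for \emph{every} interval $i \in \IPRED$ (not only those in $\OPTHAT$), there is an interval $j \in \IOPT$ with $j$ overlapping $i$ and the right endpoint of $j$ no further to the right than the right endpoint of $i$. Since $\OPTHAT = \OPTI \cap \IPRED \subseteq \IPRED$, this immediately yields the lemma. The key is to exploit how $\IOPT$ is built: repeatedly adding an interval of $\IPRED$ whose right endpoint is earliest among those not overlapping any already-chosen interval. Two consequences of this rule drive the argument: the intervals are added to $\IOPT$ in nondecreasing order of their right endpoints, and the construction stops only when no interval of $\IPRED$ can be added, so $\IOPT$ is a maximal (not merely maximum) set of pairwise non-overlapping intervals of $\IPRED$.

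First I would dispose of the trivial case $i \in \IOPT$ by taking $j = i$, since an interval overlaps itself and does not extend further to the right than itself. So assume $i \in \IPRED \setminus \IOPT$. By maximality of $\IOPT$, the interval $i$ overlaps at least one interval of $\IOPT$; among all such intervals let $j$ be the one added to $\IOPT$ earliest during the greedy construction. I would then argue that, just before $j$ was added, $i$ was a legal candidate for selection: by the choice of $j$ as the \emph{first} interval of $\IOPT$ to overlap $i$, none of the intervals already in $\IOPT$ at that moment overlaps $i$, so $i$ belonged to the set of intervals ``not overlapping any already selected interval''. Since the greedy step selected $j$ --- the candidate with the earliest right endpoint --- rather than $i$, the right endpoint of $j$ is at most the right endpoint of $i$. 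Combined with the fact that $j$ overlaps $i$, this is exactly the claimed property.

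The only points requiring care are the bookkeeping around the greedy order --- namely that ``added earlier'' corresponds to ``right endpoint no further right'' (immediate from the greedy rule together with any fixed tie-breaking) --- and that the construction terminates in a maximal set, so that $i$ is guaranteed to overlap something in $\IOPT$. Both follow directly from the description of how $\IOPT$ is computed, so I do not anticipate a genuine obstacle; the content of the lemma is essentially the standard structural property of the earliest-endpoint-first schedule, specialized to the comparison between a selected interval and one it blocks.
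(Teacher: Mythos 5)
Your proof is correct. It takes a genuinely different route from the paper's: the paper argues by contradiction, assuming no interval of \IOPT overlapping $i$ ends by $i$'s right endpoint, and then splits into cases according to whether zero, one, or at least two intervals of \IOPT overlap $i$ --- using optimality of \IOPT in the first case, the greedy selection rule in the second, and a geometric disjointness argument (two disjoint intervals cannot both cover the right end of $i$) in the third. You instead give a direct argument that tracks the execution of the greedy: the witness is either $i$ itself or the first interval $j$ of \IOPT, in order of selection, that overlaps $i$; at the step where $j$ was chosen, $i$ was still an eligible candidate, so the earliest-right-endpoint rule forces $j$ to end no later than $i$. Your version is more uniform (one argument instead of three cases), needs only maximality of \IOPT rather than its optimality, and proves the stronger statement that the conclusion holds for every interval of \IPRED, not just those in \OPTHAT --- making transparent that this is the standard structural property of the earliest-endpoint-first schedule. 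What the paper's proof buys in exchange is that it reasons almost entirely about the final set \IOPT (its feasibility and optimality) rather than about the temporal order of the greedy's choices; your argument depends on that order, but only through the easily justified observation that $i$ remains a candidate until the first overlapping interval is selected.
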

\begin{proof}
Assume to the contrary that there is no interval in \IOPT that
overlaps $i$ and ends no later than $i$.  If $i$ does not overlap
anything in~\IOPT, we could have added $i$ to \IOPT and have a
feasible solution (non-overlapping intervals), contradicting the
fact that \IOPT is optimal.
Thus, $i$ must overlap an interval $r \in \IOPT$, which, by assumption,
must end strictly later than~$i$.

If $r$ is the only interval in \IOPT overlapping $i$, this contradicts
the construction of~\IOPT, since $i$ would have been in 
\IOPT instead of~$r$.

Otherwise, there is an interval $s \in \IOPT \setminus \SET{r}$ overlapping $i$.
Since $r$ and $s$ are both in \IOPT and $r$ overlaps the right
endpoint of $i$,
$s$ ends no later than the start of $r$, meaning that $s$ overlaps $i$
and ends before $i$, contradicting the assumption that no interval in
\IOPT overlaps $i$ and ends no later than $i$.
\end{proof}

We let $U$ denote the set of intervals in $\IOPT \cap \FP$ that are
not replaced during the execution of \TRUSTGREEDY, i.e., those
intervals that stay in the plan $A$ but never show up in $I$.
We define a set \OFN consisting of a copy of each interval in \OPTFN and
let $\F = \OFN \cup U$.
We define a mapping \[f \colon \OPTI \rightarrow \TGI \cup \F\] as
follows.
For each $i \in \OPTI$:
\begin{enumerate}
\item \label{stepI} If there is an interval in \IOPT that overlaps $i$ and ends no
  later than $i$, then let $r$ be the rightmost such interval.
  \begin{enumerate}
  \item \label{stepFU} If $r \in U \cup \TGTP$, then $f(i)=r$.
  \item \label{stepTGFN} Otherwise, $r$ has been replaced by some
    interval $t \in \FN$.
    In this case, $f(i)=t$.
  \end{enumerate}
\item Otherwise, by Lemma~\ref{lemma:IOPT}, $i$ belongs to \OPTFN.
  \begin{enumerate}
  \item \label{stepTGmI}If there is
    \begin{itemize}
      \item an interval in \TGFN that overlaps
        $i$ and ends no later than $i$ and
      \item an interval in $U$ that overlaps
          $i$'s right endpoint,
    \end{itemize}
    let $r$ be the rightmost
    interval in \TGFN that overlaps $i$ and ends no later than $i$.
    In this case, $f(i)=r$.
  \item \label{stepFFN} Otherwise, let $o_i$ be the copy
    of $i$ in \OFN. In this case, $f(i)=o_i$.
  \end{enumerate}
\end{enumerate}

In the example of Figure~\ref{fig:tgExample}, we have $U = \{S_6\}$ and $\OFN = \{S'_1\}$, where $S'_1$ is a copy of $S_1$. Thus, we have $\F = \{S'_1, S_6\}$ and the mapping $f$ is from $\{S_1, S_5, S_7 \}$ to $\{S'_1, S_1, S_5, S_6\}$.
The mapping would be $S_1\rightarrow S'_1$, $S_5 \rightarrow S_5$, $S_7 \rightarrow S_6$. 

We let \FEAS denote the subset of \F mapped to by $f$
and note that in step~\ref{stepFU}, intervals are
added to $F \cap U$ when $r\in U$.
In step~\ref{stepFFN}, all
intervals are added to $F \cap \OFN$.

We prove that $f$ is an injection (Lemma~\ref{lemma:injection}) and $F$ is
a feasible solution (Lemma~\ref{lemma:feasible}), and conclude that
\TRUSTGREEDY has an optimal competitive ratio (Theorem~\ref{th:trustgreedymain}).

\begin{lemma}
\label{lemma:injection}
The mapping $f$ is an injection.
\end{lemma}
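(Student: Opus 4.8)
The plan is to argue by contradiction. Suppose $f(i)=f(i')=:z$ for distinct $i,i'\in\OPTI$; since $\OPTI$ is a feasible (pairwise non-overlapping) set, assume without loss of generality that $i$ lies strictly to the left of $i'$, i.e., the right endpoint of $i$ is at most the left endpoint of $i'$. The first step is bookkeeping: the codomain splits as a disjoint union $\TGI\cup\F=\TGTP\sqcup U\sqcup\TGFN\sqcup\OFN$ (the first three are disjoint because $\TP,\FP,\FN$ are, with $U\subseteq\FP$; and $\OFN$ consists of fresh copies), while step~\ref{stepFU} sends its input into $\TGTP\cup U$, steps~\ref{stepTGFN} and~\ref{stepTGmI} into $\TGFN$, and step~\ref{stepFFN} into $\OFN$. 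So $z$ lies in exactly one block, and that block pins down which rule(s) could have produced $z$ from $i$ and from $i'$. Throughout I will use two facts read off the definition of $f$: the right endpoint of $f(\cdot)$ never exceeds that of its argument, and (by the acceptance rule of \TRUSTGREEDY) each interval of $\FN$ replaces at most one interval of $\IOPT$.

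The cases $z\in\OFN$ and $z\in\TGTP\cup U$ are short. If $z\in\OFN$, both assignments used step~\ref{stepFFN}, so $z=o_i=o_{i'}$ and $i=i'$. If $z\in\TGTP\cup U$, both used step~\ref{stepFU}, so $z\in\IOPT$ overlaps both $i$ and $i'$ and ends no later than either; then its right endpoint is at most that of $i$, hence at most the left endpoint of $i'$, contradicting that $z$ overlaps $i'$.

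The substantive case is $z=:t\in\TGFN$, where for each of $i$ and $i'$ the value $t$ came from step~\ref{stepTGFN} --- so $t$ is the (unique) $\FN$-interval that replaced some $r\in\IOPT$, with $t$ overlapping $r$, $t$ ending no later than $r$, and $r$ overlapping the argument and ending no later than it --- or from step~\ref{stepTGmI} --- so $t$ overlaps the argument, ends no later than it, and there is a $u\in U$ overlapping the argument's right endpoint. With $i$ left of $i'$, three of the four sub-cases are immediate: two uses of step~\ref{stepTGFN} force the replaced $\IOPT$-interval to be common to $i$ and $i'$, and it then overlaps both and ends no later than either, which is impossible as above; two uses of step~\ref{stepTGmI} make $t$ itself overlap both and end no later than either, impossible; and if the left interval $i$ uses step~\ref{stepTGFN} while $i'$ uses step~\ref{stepTGmI}, then $t$ ends no later than $i$, hence no later than the left endpoint of $i'$, contradicting that $t$ overlaps $i'$.

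The remaining sub-case --- $i$ (the left one) assigned via step~\ref{stepTGmI} and $i'$ via step~\ref{stepTGFN} --- is the real obstacle, and it is precisely what the extra $U$-condition in step~\ref{stepTGmI} is there to handle. Let $u\in U$ overlap the right endpoint of $i$ and let $r'\in\IOPT$ be the interval that $t$ replaced. Since $u$ starts before the right endpoint of $i$, which is at most the left endpoint of $i'$, which is before the right endpoint of $r'$ (because $r'$ overlaps $i'$), we get that $u$ starts before $r'$ ends; as $u$ and $r'$ both belong to $\IOPT$ they are non-overlapping, which forces $u$ to end no later than $r'$ starts. But $t$ overlaps $r'$, so the right endpoint of $t$ exceeds the left endpoint of $r'$, hence exceeds the right endpoint of $u$, which is at least the right endpoint of $i$ --- contradicting that $t$ ends no later than $i$. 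This closes the last case and hence establishes injectivity. I expect essentially all the difficulty to be concentrated in this final sub-case; the rest is disjointness of $\OPTI$ together with the partition of the codomain.
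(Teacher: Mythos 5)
Your proof is correct and follows essentially the same route as the paper's: injectivity within each step of the mapping via the disjointness of $\OPTI$ (at most one interval of $\OPTI$ overlaps a given right endpoint) plus the fact that each false negative replaces at most one interval of $\IOPT$, with the only substantive work in the conflict between steps~\ref{stepTGFN} and~\ref{stepTGmI}, which you resolve---as the paper does---by playing the disjointness of the $U$-interval and the replaced $\IOPT$-interval against the positional constraints. The paper phrases the final contradiction as ``$r$ overlaps $u$, both in $\IOPT$'' while you instead use that disjointness to contradict the right-endpoint bound on $t$, but this is the same geometric fact.
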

\begin{proof}
Intervals in $U \cup \TGTP$ are only mapped to in
step~\ref{stepFU}.
If an interval $i \in \OPTI$ is mapped to an interval $r \in U \cup
\TGTP$, $i$ overlaps the right endpoint of $r$. There can be only
one interval in \OPTI overlapping the right endpoint of $r$, so this part of the
mapping is injective.

Intervals in \TGFN are only mapped to in steps~\ref{stepTGFN}
and~\ref{stepTGmI}.
In step~\ref{stepTGFN}, only intervals that replace intervals in \IOPT are
mapped to.
Since each interval in \TGFN replaces at most one interval in \IOPT and
the right endpoint of each interval in \IOPT overlaps at most one interval
in $\OPTI$, no interval is mapped to twice in step~\ref{stepTGFN}.
If, in step~\ref{stepTGmI}, an interval, $i$, is mapped to an interval,
$r$, $i$ overlaps the right endpoint of $r$.
There can be only one interval in \OPTI overlapping the right endpoint
of $r$, so no interval is mapped to twice in step~\ref{stepTGmI}.

We now argue that no interval is mapped to in both steps~\ref{stepTGFN}
and~\ref{stepTGmI}.
Assume that an interval, $i_1$, is mapped to an interval, $t$, in
step~\ref{stepTGFN}.
Then, there is an interval, $r$, such that $r$ overlaps the right
endpoint of $t$ and $i_1$ overlaps the right endpoint of $r$.
This means that the right endpoint of $i_1$ is no further to the left
than the right endpoint of $t$.
Assume for the sake of contradiction that an interval $i_2 \neq i_1$
is mapped to $t$ in step~\ref{stepTGmI}.
Then, $i_2$ overlaps the right endpoint of $t$, and there is an
interval, $u \in U$, overlapping the right endpoint of $i_2$.
Since $i_2$ overlaps $t$ and $t$ does not extend to the right of $i_1$, $i_2$ must be to the left of $i_1$.
Since $i_2$ is mapped to $t$, $t$ extends no further to the right than
$i_2$.
Thus, since $r$ overlaps both $t$ and $i_1$, $r$ must overlap the
right endpoint of $i_2$, and hence, $r$ overlaps $u$.
This is a contradiction since $r$ and $u$ are both in \IOPT.

Intervals in $F \cap \OFN$ are only mapped to in
step~\ref{stepFFN} and no two intervals are mapped to the same
interval in this step.
\end{proof}

\begin{lemma}
\label{lemma:feasible}
The subset \FEAS of \F mapped to by $f$ is a feasible
solution.
\end{lemma}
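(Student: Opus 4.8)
The plan is to reduce the feasibility of $\FEAS$ to ruling out overlaps between one interval of $\FEAS \cap U$ and one interval of $\FEAS \cap \OFN$, and then to rule these out using how \TRUSTGREEDY processes the false negative that such an $\OFN$-interval copies. First I would note that $\F = \OFN \cup U$ is a union of two internally feasible, disjoint pieces: $U \subseteq \IOPT$ inherits feasibility from $\IOPT$; the intervals of $\OFN$ are copies of the intervals of $\OPTFN \subseteq \OPTI$, which is feasible; and $U \cap \OFN = \emptyset$ since $\FP \cap \FN = \emptyset$. By the description of $f$, an interval enters $\FEAS \cap U$ only in step~\ref{stepFU} (the case $r \in U$) and an interval enters $\FEAS \cap \OFN$ only in step~\ref{stepFFN}. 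So it suffices to fix $u \in \FEAS \cap U$ with $f(i_1) = u$ for some $i_1 \in \OPTI$ (necessarily assigned in step~\ref{stepI}), to fix the copy $o$ of some $i_2 \in \OPTFN$ assigned in step~\ref{stepFFN}, and to derive a contradiction from the assumption that $u$ overlaps $i_2$. Throughout I use the convention that intervals sharing only an endpoint do \emph{not} overlap, and I note $i_1 \neq i_2$: step~\ref{stepI} applies to $i_1$ because some interval of $\IOPT$ overlaps $i_1$ and ends no later than $i_1$, whereas step~2 applies to $i_2$ precisely because no such $\IOPT$-interval exists.

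The first part of the argument locates $i_2$ relative to $u$. By step~\ref{stepI}, $u$ is the rightmost interval of $\IOPT$ that overlaps $i_1$ and ends no later than $i_1$, so $i_1$ contains the right endpoint of $u$. If $i_2$ ended exactly at the right endpoint of $u$, then $u$ would be an $\IOPT$-interval overlapping $i_2$ and ending no later than $i_2$, so $i_2$ would have been handled in step~\ref{stepI}, not step~\ref{stepFFN}. If $i_2$ ended strictly to the right of $u$'s right endpoint, then a two-line check on endpoints shows $i_1$ and $i_2$ would be forced to overlap, which is impossible for distinct intervals of the feasible solution $\OPTI$. Hence $i_2$ ends strictly before the right endpoint of $u$, so $u \in U$ contains the right endpoint of $i_2$, i.e. the second bullet of step~\ref{stepTGmI} holds for $i_2$. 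Since $i_2$ was nevertheless placed in step~\ref{stepFFN}, the first bullet of step~\ref{stepTGmI} must fail: \emph{no interval of $\TGFN$ overlaps $i_2$ and ends no later than $i_2$}. In particular $i_2 \notin \TGFN$ (otherwise $i_2$ is itself such a witness), so \TRUSTGREEDY rejected the false negative $i_2$ when it arrived.

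The second part analyses that rejection using the plan $A$ just before $i_2$ is processed. Here I would first record two invariants of \TRUSTGREEDY that are only implicit in its description: an accepted interval is never later removed from the plan (removing it would require an arriving false negative overlapping it, which is rejected by the ``overlaps no previously accepted interval'' clause), and at every moment the plan equals the not-yet-replaced intervals of $\IOPT$ together with the accepted false negatives. Since $u \in U$ is never replaced, $u$ lies in $A$ when $i_2$ arrives, and $u$ overlaps $i_2$ and ends no earlier than $i_2$; so if $i_2$ overlapped no other interval of $A$, \TRUSTGREEDY would have accepted $i_2$ by replacing $u$, contradicting $u \in U$. Likewise, if $i_2$ overlapped a previously accepted interval, that interval lies in $A$ and differs from $u$ (which, being in $\FP$, is never accepted). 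Either way there is $w \in A$ with $w \neq u$ overlapped by $i_2$; feasibility of $A$ makes $w$ and $u$ disjoint, and $w$ cannot lie to the right of $u$ (that would push $i_2$'s right endpoint past $u$'s, contradicting the first part), so $w$ lies to the left of $u$ and ends no later than $i_2$. By the plan invariant $w$ is either in $\IOPT$ or in $\TGFN$; the former gives an $\IOPT$-interval overlapping $i_2$ and ending no later than $i_2$, contradicting that $i_2$ was handled in step~2, so $w \in \TGFN$ --- contradicting that no interval of $\TGFN$ overlaps $i_2$ and ends no later than $i_2$. This contradiction proves the lemma.

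I expect the main obstacle to be the second part: pinning down the plan interval $w$ responsible for \TRUSTGREEDY declining $i_2$ and forcing it to be an accepted false negative ending before $i_2$. This rests on the two \TRUSTGREEDY invariants above, which I would state and justify explicitly before running the argument; once they are in hand, what remains is a short case split on whether $i_2$ is blocked by an already-accepted interval or by having two plan-neighbours, together with the endpoint bookkeeping from the first part, all of which is routine.
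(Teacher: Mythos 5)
Your proof is correct and follows essentially the same route as the paper's: reduce feasibility of $\FEAS$ to ruling out an overlap between $u \in \FEAS \cap U$ and an interval of $\FEAS \cap \OFN$, show such an $\OFN$-interval must end under $u$'s right endpoint, and derive a contradiction from the fact that the corresponding false negative would have been accepted by replacing $u$ (equivalently, that whatever blocked it lies in $\IOPT \cup \TGFN$ and contradicts the mapping's case classification). You merely make explicit some invariants of \TRUSTGREEDY and endpoint bookkeeping that the paper leaves implicit.
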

\begin{proof}
We first note that $F \cap U$ is feasible since $F \cap U
\subseteq U \subseteq \IOPT$ and \IOPT is feasible.
Moreover, $\FEAS \cap \OFN$ is feasible since the intervals of $\FEAS
\cap \OFN$ are identical to the corresponding subsets of \OPTI.
Thus, we only need to show that
no interval in $\FEAS \cap U$ overlaps any interval in $\FEAS \cap \OFN$.

Consider an interval $u \in \FEAS \cap U$ mapped to from an interval
$i \in \OPTI$.
Since $i$ is not mapped to its own copy in \F, its copy does not belong to \FEAS.
Since $i \in \OPTI$, no interval in $\FEAS \cap \OFN$ overlaps $i$.
Thus, it is sufficient to argue that $\FEAS \cap
\OFN$ contains no interval strictly to the left of $i$ overlapping $u$.

Assume for the sake of contradiction that there is an interval $\ell \in
\FEAS \cap \OFN$ to the left of $i$ overlapping $u$. 
Since $\ell$ ended up in \FEAS although its right endpoint is
overlapped by an interval from $U$, there is no interval in \IOPT (because of
step~\ref{stepI} in the mapping algorithm) or in $\TGFN$ (because of step~\ref{stepTGmI} in the
mapping algorithm)
overlapping $\ell$ and ending no later than $\ell$.
Thus, $\IOPT \cup \TGFN$ contains no interval strictly to the left of $u$
overlapping $\ell$.
This contradicts the fact that $u$ has not been replaced since the
interval in \OPTFN corresponding to $\ell$ could have replaced it. 
\end{proof}

Using Lemmas~\ref{lemma:injection} and \ref{lemma:feasible}, we obtain
a lower bound matching the general upper bound of Theorem~\ref{thm:generallower}:

\begin{theorem}
\label{th:trustgreedymain} 
The competitive ratio of \TRUSTGREEDY for the online interval scheduling problem is at least $1-\gamma$.
\end{theorem}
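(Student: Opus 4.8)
The plan is to convert Lemmas~\ref{lemma:injection} and~\ref{lemma:feasible} into a counting bound on the profit of \TRUSTGREEDY. Recall that $f$ is an injection from \OPTI into $\TGI \cup \F$ whose image meets \F in exactly \FEAS, and that \FEAS is a feasible set of pairwise non-overlapping intervals. The first step is the elementary observation that $\TGI$ and \F are disjoint: \TRUSTGREEDY only ever accepts intervals that lie in \IOPT or in \FN, so it accepts no interval of $U \subseteq \IOPT \cap \FP$, while every member of \OFN is a freshly created copy that appears in no input sequence; hence $\TGI \cap \F = \emptyset$. Consequently the image of $f$ is contained in $\TGI \cup \FEAS$, these two sets are disjoint, and injectivity of $f$ (together with $\SIZE{\OPTI} = \OPT(I)$) gives
\[ \OPT(I) \leq \SIZE{\TGI} + \SIZE{\FEAS}\,. \]

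The second step bounds $\SIZE{\FEAS}$ by $\eta(\IPRED,I)$. By construction $\FEAS \subseteq \F = \OFN \cup U$, where $U \subseteq \FP$ and every element of \OFN is a copy, with identical endpoints, of an interval of $\OPTFN \subseteq \FN$. Replacing each copy in $\FEAS \cap \OFN$ by the interval it copies turns \FEAS into a set of genuine intervals, all drawn from $\FP \cup \FN$; since this relabelling moves no endpoint, Lemma~\ref{lemma:feasible} still guarantees that these intervals are pairwise non-overlapping. Hence they form a feasible selection from $\FP \cup \FN$, so $\SIZE{\FEAS} \leq \OPT(\FP \cup \FN) = \eta(\IPRED,I)$.

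Combining the two displays, and recalling that $\SIZE{\TGI}$ is precisely the profit of \TRUSTGREEDY on prediction \IPRED and input $I$, yields
\[ \SIZE{\TGI} \geq \OPT(I) - \eta(\IPRED,I) = \big(1-\ernorm\big)\OPT(I)\,, \]
which is the asserted bound, and it holds with additive constant $b=0$ (so \TRUSTGREEDY is in fact strictly $(1-\ernorm)$-competitive). The genuinely hard work is already isolated in Lemmas~\ref{lemma:injection} and~\ref{lemma:feasible}; within this theorem the only points needing care are (i) checking $\TGI \cap \F = \emptyset$ so that the two cardinalities genuinely add, and (ii) the bookkeeping that collapsing the \OFN-copies to their originals keeps the set feasible and contained in $\FP \cup \FN$, so that it can be charged against $\OPT(\FP \cup \FN) = \eta$. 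I expect (ii) to be the only mildly delicate step; everything else is immediate.
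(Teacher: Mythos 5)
Your proposal is correct and follows essentially the same route as the paper's proof: both use the injectivity of $f$ to get $\OPT(I) \leq |\TGI| + |\FEAS|$ and then charge $|\FEAS|$ against $\OPT(\FP \cup \FN)$ via the feasibility of \FEAS and the inclusions $U \subseteq \FP$, $\OPTFN \subseteq \FN$. Your extra checks (disjointness of \TGI and \F, and the collapsing of \OFN-copies to their originals) are correct refinements of steps the paper leaves implicit.
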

\begin{proof}
	We show that
	\begin{align*}\TRUSTGREEDY(\IPRED,I) & \geq \OPT(I) - \OPT(\FP \cup \FN) \\ & = \big(1-\ernorm\big)\OPT(I):\end{align*}
	\begin{align*}
		\OPT(I) & \leq |\TGI| + |\FEAS|,
		\text{ since, by Lemma~\ref{lemma:injection}, $f$ is an injection}\\
		& \leq |\TGI| + \OPT(\F),
		\text{ since, by Lemma~\ref{lemma:feasible}, \FEAS is feasible} \\
		&\leq |\TGI| + \OPT(\FP \cup \FN), \text{ since } U \subseteq
		\FP \text{ and } \OPTFN \subseteq \FN  
	\end{align*}
	\end{proof}

\subsection{Consistency-Robustness Trade-off}

We study the trade-off between the competitive ratio of the interval scheduling algorithm when predictions are error-free (consistency) and when predictions are adversarial (robustness).

\subsubsection{General Upper Bounds}
The following proposition shows an obvious trade-off between the consistency and robustness of deterministic algorithms. 

\begin{proposition}
For the online interval scheduling problem on a path of $m$ vertices, any deterministic algorithm with consistency $\alpha \in \Theta(1)$ has robustness $\beta \in O(\frac{1}{m})$.
\label{prop:deterconstrob}
\end{proposition}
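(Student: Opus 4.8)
The plan is to build, for any deterministic algorithm \ALG claimed to have constant consistency $\alpha$, an adversarial family of inputs on which the prediction is accurate enough to force near-optimal behavior in the error-free case, yet on which a worst-case realization drives \ALG's profit down by a factor of roughly $1/m$. The natural vehicle is the classical lower-bound strategy for interval scheduling on a path with $m$ edges: first present a block of $\sqrt{m}$ disjoint long intervals of length $\sqrt{m}$, and then, for each long interval the algorithm accepts, flood it with $\sqrt{m}$ disjoint unit intervals; iterating/recursing this idea yields the $\Theta(1/m)$ deterministic bound without predictions. I want to reuse exactly this construction, but choose the prediction \IPRED to be the set of all intervals that will ever be shown, so that when the realized input coincides with the ``long intervals only'' branch the prediction is error-free and an $\alpha$-consistent algorithm must accept a constant fraction of the long intervals.

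\textbf{Step 1: Set up the instance and the prediction.} On a path with $m$ edges, let \IPRED consist of $\sqrt m$ disjoint long intervals $L_1,\dots,L_{\sqrt m}$ together with, for each $L_j$, a batch $B_j$ of $\sqrt m$ disjoint unit intervals contained in $L_j$. Here $\OPT(\IPRED)=\sqrt m$ (take one unit interval from each $B_j$, or all the $L_j$), and the optimal profit is $\sqrt m$ in all the realized instances below.

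\textbf{Step 2: The error-free branch forces \ALG to accept many long intervals.} Consider the input $I$ that presents only $L_1,\dots,L_{\sqrt m}$ (in some order) and nothing else. Then $\FP$ is the set of all unit intervals, $\FN=\emptyset$. I would like this to be the ``consistency'' case, but strictly $\FP\neq\emptyset$, so this is the first subtlety: consistency is defined for error-free predictions. The clean fix is to instead take the prediction to be exactly $\{L_1,\dots,L_{\sqrt m}\}$ on this branch — i.e., run the standard two-level adversary where the prediction always equals the realized input. Concretely: the adversary first commits \IPRED$=\{L_1,\dots,L_{\sqrt m}\}$ and feeds these; by $\alpha$-consistency (with additive constant $b$), \ALG accepts at least $\alpha\sqrt m - b$ of them, hence at least $(\alpha/2)\sqrt m$ for $m$ large. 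This is the step I expect to be the main obstacle, because consistency only constrains \ALG when the whole prediction is correct, so the adversary cannot simultaneously keep the prediction error-free and later reveal the unit-interval flood on the \emph{same} instance; the resolution is that robustness is a \emph{worst-case over predictions} quantity, so we are allowed to use a \emph{different} (adversarial) prediction for the robustness instance while only needing the error-free instance to pin down \ALG's behavior on the shared prefix $L_1,\dots,L_{\sqrt m}$ — and \ALG, being deterministic, behaves identically on that prefix regardless of what prediction it was given, as long as the prefix is the same. So: fix the realized prefix $\langle L_1,\dots,L_{\sqrt m}\rangle$; with the error-free prediction, \ALG must accept $\ge(\alpha/2)\sqrt m$ of them; hence with \emph{any} prediction \ALG accepts that same set on this prefix.

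\textbf{Step 3: The robustness branch.} Now take the adversarial prediction to be, say, a single long interval $L_1$ (or anything wildly wrong), and feed the realized input $\langle L_1,\dots,L_{\sqrt m},\,\text{then }B_j\text{ for each }L_j\text{ \ALG accepted}\rangle$. Since \ALG accepted $\ge(\alpha/2)\sqrt m$ of the $L_j$'s (by Step 2, as its behavior on the prefix is fixed), all of those $L_j$ are now blocked, so \ALG accepts at most one unit interval from each of the remaining, un-accepted $\le\sqrt m$ batches plus the $\le\sqrt m$ long ones it took: $\ALG(I)=O(\sqrt m)$. Meanwhile $\OPT(I)=\Omega(\alpha\sqrt m\cdot\sqrt m)=\Omega(\alpha\, m)$, by choosing all $\sqrt m$ unit intervals inside each of the $\ge(\alpha/2)\sqrt m$ accepted batches. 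Hence $\ALG(I)/\OPT(I)=O\!\left(\frac{\sqrt m}{\alpha m}\right)=O\!\left(\frac{1}{\alpha\sqrt m}\right)$, and since $\alpha\in\Theta(1)$ this is $O(1/\sqrt m)$. To upgrade to $O(1/m)$, recurse: inside each accepted batch the same construction shows \ALG either leaves room (losing a factor) or accepts a constant fraction of the $\sqrt m$ sub-intervals, which are themselves refined; unrolling $\log\log m$ levels, or more directly using the standard $m$-edge reduction of \cite{BE98,BKW22} that already gives $\Theta(1/m)$ once a constant fraction of the top-level long intervals is forced, yields $\beta\in O(1/m)$. I would cite the $\frac1m$ upper bound machinery of \cite{BE98} for this last refinement and only provide the one extra idea — that error-free-prediction consistency forces the constant fraction at the top level — in full detail.

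<br>

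Note: the above is a plan; in the write-up I would state precisely that for a deterministic \ALG the restriction of its behavior to a fixed request prefix is independent of the prediction, make the two-prediction argument airtight, and fold the recursion into a single invocation of the known deterministic $1/m$ bound rather than redo it.
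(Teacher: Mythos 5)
Your plan has two genuine gaps. First, the claim that a deterministic algorithm ``behaves identically on that prefix regardless of what prediction it was given'' is false: the prediction is part of the algorithm's input, so its deterministic behavior is a function of the pair (prediction, request prefix), and changing the prediction can change its decisions on the very same prefix. The two-prediction argument is therefore unsound as stated. It is also unnecessary: robustness is a worst case over inputs that need not match the given prediction, so the adversary can keep the \emph{same} prediction $\{L_1,\dots,L_{\sqrt m}\}$ in both branches and simply let the realized input deviate from it (the flood intervals become false negatives). Determinism plus an identical prediction and an identical prefix then yields identical behavior, which is how the paper's proof pins down the algorithm.

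Second, and more seriously, your construction only yields $O(1/\sqrt m)$: \ALG retains $\Theta(\sqrt m)$ accepted long intervals while \OPT collects $\Theta(m)$ unit intervals, and the proposed recursion to upgrade this to $O(1/m)$ does not work. Consistency can only be invoked when the prediction is entirely correct, but the second-level (unit) intervals are false negatives, so consistency places no constraint on whether \ALG accepts any of them; \ALG may reject them all, leaving nothing for the adversary to exploit at deeper levels. The citation of the prediction-free $1/m$ machinery does not repair this, since that argument forces acceptance via competitiveness on the whole instance, not via consistency. The paper sidesteps the recursion by choosing the granularity differently: the prediction is a \emph{constant} number $p=\lceil(b+1)/\alpha\rceil$ of disjoint intervals, each of length $\ell=\lfloor m/p\rfloor=\Theta(m)$. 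Consistency (via $\alpha p - b \ge 1$) forces \ALG to accept at least one of them on the prefix, and flooding each accepted interval with $\ell$ disjoint unit intervals gives a profit of at most $p=O(1)$ for \ALG against at least $\ell=\Theta(m)$ for \OPT, hence $\beta \le 1/\ell \in O(1/m)$ in a single step.
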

\begin{proof}
  If a deterministic algorithm \ALG has a consistency of $\alpha \in \Theta(1)$, there exists a non-negative constant $b$ such that, for any input $I$ with a correct prediction $\IPRED=I$, $\ALG(\IPRED,I) \geq \alpha \OPT(I) - b$.
  Let $p = \CEIL{(b+1)/\alpha}$ and $\ell = \FLOOR{m/p}$ and consider a prediction consisting of $p$ disjoint intervals of length $\ell$:
  \[ \IPRED = \bigcup_{i=0}^{p-1} \{ (i\ell, (i+1)\ell) \} \,.\]
  Note that $\OPT(\IPRED)=p$.
  Thus, if the input starts with the $p$ predicted intervals, \ALG must accept at least one of them, since $\alpha p - b \geq 1$.
  Assume now that, for each of the predicted intervals accepted by \ALG, the input continues with $\ell$ disjoint unit length intervals overlapping that interval.
  Since \OPT will accept each of the unit length intervals, this shows that the robustness of \ALG is
 \[ \beta \leq \frac{1}{\ell} \leq \frac{p}{m-p+1} \,. \]
    Since $p= \CEIL{(b+1)/\alpha}$ is a constant, $\beta\in O\left(\frac1m\right)$.
\end{proof}

The more interesting case is randomized algorithms. The proof of the following was inspired
by the proof of Theorem 13.8 in~\cite{BE98}
for the online case without predictions, and that $\mathrm{\Omega}(\log m)$ result was
originally proven in~\cite{ABFR94}.
We address the more relevant case of trade-offs when the robustness is non-trivial.

\begin{theorem}\label{th:constrob}
  Consider a (possibly randomized) $\alpha$-consistent and
  $\beta$-robust algorithm \ALG for the online interval scheduling problem.
  If there exists an $\eps$, $0<\eps<1$, such that $\beta \geq 1/m^{1-\eps}$, then
\begin{align*}
& \alpha \leq 1 - \beta \cdot \left( \frac{\eps \cdot \log m}{2} - O(1) \right)
\text{~~and~~}\\
&\beta \leq (1-\alpha) \cdot \frac{2}{ \eps \cdot \log m} + O \left( \frac{1}{\log^2 m} \right) \,.
\end{align*}
\end{theorem}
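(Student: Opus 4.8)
The plan is to adapt the classical $\Omega(\log m)$ lower bound technique (Theorem 13.8 of~\cite{BE98}, originally~\cite{ABFR94}) to the setting where the algorithm also gets a prediction, and to track the consistency constraint carefully alongside the robustness constraint. The adversarial construction is the standard recursive/nested interval instance: on a path of $m$ edges, present one long interval spanning (essentially) the whole path, then recurse inside by splitting into two halves, then four quarters, and so on, for roughly $\log m$ levels. At each level the adversary may ``stop'' — meaning the intervals presented at the current level are the last ones, so the optimal profit is the number of level-$k$ intervals (a power of two), while an algorithm that has committed to coarse intervals earlier is blocked. By Yao's principle it suffices to put a distribution over the stopping level and lower-bound the expected competitive ratio of any deterministic algorithm.

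The key new ingredient is that the prediction must be used to enforce consistency. The natural choice: let the prediction $\IPRED$ be the full set of intervals at the \emph{finest} level (the $\Theta(m)$ unit-ish intervals at depth $\log m$), for which $\OPT(\IPRED)$ is large; since \ALG is $\alpha$-consistent, when the input actually equals this finest level it must accept at least $\alpha\cdot\OPT(\IPRED)-b$ of them. So I would run two scenarios sharing the same prefix of the instance: in the ``robustness scenario'' the adversary stops early (at a level chosen from a suitable distribution), and the prediction is adversarial relative to that truncated input, forcing the $\beta$-robustness bound to bite; in the ``consistency scenario'' the adversary goes all the way to the finest level, which matches $\IPRED$, forcing the $\alpha$-consistency bound to bite. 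The point is that a single deterministic algorithm's behavior on the shared prefix cannot simultaneously do well in both: committing to coarse intervals early (good for the stop-early cases) wastes capacity needed for the finest level (bad for consistency), and vice versa. Quantitatively, I expect a weighting of the $\approx \eps\log m$ relevant levels (the recursion only needs to go down to depth where intervals still have length $\geq m^{\eps}$, which is where the constraint $\beta\geq 1/m^{1-\eps}$ comes from — below that depth the robustness guarantee is already weaker than trivial) that yields a bound of the form: expected algorithm profit $\lesssim$ (something), forcing
\[
\alpha\cdot\OPT(\IPRED) + \left(\tfrac{\eps\log m}{2}-O(1)\right)\cdot\beta\cdot(\text{scale}) \leq \OPT(\IPRED) + O(1),
\]
after normalizing by $\OPT(\IPRED)$ and rearranging, which gives $\alpha \leq 1-\beta(\eps\log m/2 - O(1))$; the second inequality $\beta \leq (1-\alpha)\tfrac{2}{\eps\log m}+O(1/\log^2 m)$ is then just algebraic rearrangement of the first (solving for $\beta$ and absorbing lower-order terms).

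I would organize the write-up as: (1) describe the nested-interval instance and the associated prediction; (2) set up the hard distribution over stopping levels (roughly uniform over the top $\Theta(\eps\log m)$ levels, or geometric — whichever makes the bookkeeping cleanest) and invoke Yao; (3) for a fixed deterministic \ALG, bound its expected profit under the distribution in terms of how much ``capacity'' it spends at each level, using a charging/potential argument that at each level an accepted coarse interval costs a factor-$2$ loss downstream; (4) separately, feed \ALG the full finest-level input and apply $\alpha$-consistency; (5) combine the two to derive the trade-off; (6) algebraically invert to get the second displayed inequality.

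\textbf{The main obstacle} I anticipate is step (3)–(5): cleanly coupling the two scenarios so that the \emph{same} deterministic algorithm's decisions on the shared prefix are used in both the expected-robustness bound and the consistency bound, and extracting the exact constant $\eps\log m/2$ rather than just $\Theta(\eps\log m)$. In the no-prediction proof of~\cite{BE98} the recursion bottoms out at unit intervals ($\log m$ levels) and gives a clean $1/\log m$; here the prediction both supplies the consistency anchor and the truncation at level $\eps\log m$ must be justified by the hypothesis $\beta\geq 1/m^{1-\eps}$ — making sure no cross-terms or off-by-constant factors creep in when these two effects interact is where the care is needed. I would also need to double-check that $\OPT$ on each truncated instance is exactly the number of intervals at the stopping level (no overlaps within a level, which holds by construction) so the competitive-ratio accounting is exact up to the additive constant $b$.
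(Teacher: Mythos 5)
Your overall architecture---the nested instance with $2^i$ disjoint intervals of length $m'/2^i$ at level $i$, robustness constraints imposed on the prefixes, a consistency constraint anchored at the finest level, and a length-counting argument showing the two cannot coexist---is the same as the paper's, and the final algebraic combination you describe matches. But two concrete points in your sketch are wrong and would derail the execution. First, the prediction cannot be only the finest level. For the consistency guarantee to bite, the prediction must equal the input, and the input in the scenario where the conflict arises is the \emph{full} nested sequence: the coarse levels must actually arrive so that the prefix-robustness constraints force the algorithm to commit line segments to them. The paper therefore takes the prediction to be the entire sequence $\sigma$; $\OPT(\sigma)$ is still $m'$, realized by the finest level, so the consistency bound is unchanged. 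With your choice ($\IPRED$ equal to the finest level only), either the input is the full sequence and the prediction has false negatives (so consistency does not apply), or the input is the finest level alone and there is no shared prefix with the robustness scenarios, so nothing couples the two constraints. Second, you have the truncation backwards. The robustness constraint on the prefix $\sigma_i$ reads $E[\ALG(\hat\sigma_i,\sigma_i)] \geq \beta\cdot 2^i - b_\beta$, which is vacuous precisely for the \emph{coarse} prefixes ($2^i < b_\beta/\beta$, i.e., interval length exceeding $m^{\eps}$ up to constants---possible since $\beta$ may be as small as $1/m^{1-\eps}$) and non-trivial for the \emph{fine} ones. The $\eps\log m - O(1)$ levels that each force an additional expected committed length of $\beta m'/2$ are those with interval lengths roughly between $2$ and $m^{\eps}$, not the ones with length at least $m^{\eps}$; restricting to the coarse levels as you propose forces no commitment at all when $\beta$ is small.

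A lesser point: Yao's principle and a distribution over stopping levels are neither needed nor well suited here, because you are juggling two separate guarantees rather than a single competitive ratio. The paper instead observes that every prefix $\sigma_i$ is itself a legitimate input on which the $\beta$-guarantee must hold (the algorithm's behavior on a prefix is the same whether or not the sequence continues), sums the resulting expected-length commitments over the non-vacuous levels by linearity of expectation, and plays the total against the space needed to satisfy consistency on the full sequence. Once the two items above are fixed, that direct accounting yields exactly the stated constants.
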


\begin{proof}
Let $r = \lfloor\log m\rfloor-1$ and let $m'=2^{r+1}$.
Consider an input sequence
\begin{align*}
& \sigma=\langle I_0, I_1,\ldots, I_{r+1} \rangle, \text{ where }\\
& I_i=\langle (0,m'/2^i),(m'/2^i,2m'/2^i),\ldots,(m'-m'/2^i,m')\rangle, \text{ for } 0\leq i\leq r+1\,.\\
\intertext{Note that $I_i$ consists of $2^i$ disjoint intervals of length $m'/2^i$. Let}
&\sigma_i=\langle I_0, I_1,\ldots, I_i\rangle,  \text{ for } 0\leq i\leq r.
\end{align*}

In order to maximize the number of small intervals that can be accepted
if they arrive, an algorithm would minimize the (expected) fraction of the line
occupied by the larger intervals, to leave space for the small intervals, while
maintaining $\beta$-robustness.

For \ALG to be $\beta$-robust, there must exist a constant $b_{\beta}$, independent of $m$, such that, for each $i$ and any prediction $\hat{\sigma}_i$ of $\sigma_i$,
\[ E[\ALG(\hat{\sigma}_i,\sigma_i)] \geq \beta \cdot \OPT(\sigma_i) - b_{\beta} = \beta \cdot 2^i - b_{\beta} \,. \]
Let $j = \max \{ 0, \CEIL{\log (b_{\beta}/\beta)} \}$ and note that, for $i \geq j$, $\beta \cdot 2^i - b_{\beta}$ is non-negative.

Let $n_i$ be the number of intervals from $I_i$ accepted by \ALG and let $L_r$ be the total length of intervals from $\sigma_r$ accepted by \ALG.
Since $\OPT(\sigma_i)-\OPT(\sigma_{i-1}) = 2^{i-1}$, we conclude, by linearity of expectations, that $E[L_r]$ is minimized, if
\[
E[n_i] =
\begin{cases}
  0, & 0 \leq i < j\\
  \beta \cdot \OPT(\sigma_j) - b_{\beta}, & i=j\\
  \beta \cdot 2^{i-1}, & j+1 \leq i \leq r\,.
\end{cases}
\]
With these values of $E[n_i]$, the expected total length of intervals from $I_i$, $j+1 \leq i \leq r$, accepted by \ALG is $\beta \cdot 2^{i-1} \times m'/2^i = \beta \cdot m'/2$.
Thus, by the linearity of expectations, 
\[ E[L_r] \geq \sum_{i=j+1}^r \beta \cdot \frac{m'}{2} = \beta \cdot \frac{m' (r-j)}{2} \,.\]
If the expected number of intervals that \ALG accepts from $\sigma_r$ is more than $\beta \cdot 2^r - b_{\beta}$, then $E[L_r]$ increases by more than one for each additional interval.
Thus, by linearity of expectations, for any prediction $\hat{\sigma}$,
\begin{equation}
\label{eq:expUpper}
E[\ALG(\hat{\sigma},\sigma)] \leq \beta \cdot 2^r - b_{\beta} + \left( m' - \beta \cdot \frac{m'(r-j)}{2} \right) \,.
\end{equation}

Now, let $\hat{\sigma}$ be the prediction consisting of exactly the intervals in $\sigma$.
Then, for \ALG to be $\alpha$-consistent, there must be a constant $b_{\alpha}$ such that
\begin{equation}
  \label{eq:expLower}
  E[\ALG(\hat{\sigma},\sigma)]\geq \alpha \cdot m' - b_{\alpha} \,.
\end{equation}
Combining Inequalities~(\ref{eq:expUpper}) and~(\ref{eq:expLower}), we obtain
\[ \beta \cdot \frac{2^r}{m'}+1-\beta \cdot \frac{r-j}{2} + \frac{b_{\alpha}}{m'} \geq\alpha \,. \]
Since $2^r/m'=1/2$ and $m' \geq m/2$,
this reduces to
\begin{align*}
\alpha & \leq 1 - \beta \cdot \frac{r-j-1}{2} + \frac{2b_{\alpha}}{m} \\
& =  1 - \beta \cdot \frac{\FLOOR{\log m}-1 - \CEIL{\log(b_{\beta}/\beta)}-1}{2} + \frac{2b_{\alpha}}{m}\\
& =  1 - \beta \cdot \frac{\log m - \log(1/\beta) - \log b_{\beta} - 4 }{2} + \frac{2b_{\alpha}}{m}\\
& \leq 1 - \beta \cdot \left( \frac{\log m - \log (m^{1-\eps}) - \log b_{\beta} - 4 }{2} \right) + \frac{2b_{\alpha}}{m}\\
& = 1 - \beta \cdot \left( \frac{ \eps \cdot \log m -c}{2} \right) + \frac{2b_{\alpha}}{m} \,, \text{ where } c = \log b_{\beta} +4 \\
& \leq 1 - \beta \cdot \left( \frac{ \eps \cdot \log m -c}{2} \right) + \beta \cdot 2b_{\alpha} \,, \text{ since } \beta \geq \frac{1}{m}\\
& = 1 - \beta \cdot \left( \frac{\eps \cdot \log m}{2} - O(1) \right)
\end{align*}
Solving for $\beta$, we get
\begin{align*}
  \beta & \leq \left(1-\alpha + \frac{2b_{\alpha}}{m}\right) \cdot \frac{2}{ \eps \cdot \log m - c} \\
  & = \left(1-\alpha + \frac{2b_{\alpha}}{m}\right) \cdot \frac{2 \cdot \left(1 + \frac{c}{\eps \cdot \log m - c}\right)}{ (\eps \cdot \log m - c) \cdot \left(1 + \frac{c}{\eps \cdot \log m - c}\right)} \\
  & = \left(1-\alpha + \frac{2b_{\alpha}}{m}\right) \cdot \frac{2 \cdot \left(1 + \frac{c}{\eps \cdot \log m - c}\right)}{\eps \cdot \log m} \\
  & = \left(1-\alpha + \frac{2b_{\alpha}}{m}\right) \cdot \left( \frac{2}{ \eps \cdot \log m} + \frac{2c}{(\eps \cdot \log m)^2 - c\eps \cdot \log m} \right) \\
  & = \left(1-\alpha\right) \cdot \left( \frac{2}{ \eps \cdot \log m} + O\left( \frac{1}{\log^2 m} \right) \right) + O\left( \frac{1}{m \log m} \right)\\
  & \leq \left(1-\alpha\right) \cdot \frac{2}{ \eps \cdot \log m} + O\left( \frac{1}{\log^2 m} \right)
\end{align*}
\end{proof}

Note that as $\alpha$ approaches~1 (optimal consistency), $\beta$ goes to~$O(1/\log^2 m)$ (worst-case robustness), and as $\beta$ goes to $\frac{2}{\eps \log m}$ (optimal robustness), $\alpha$ goes to $O(1/\log m)$ (worst-case consistency).

\subsubsection{\ROBUSTTRUST}
Next, we present a family of randomized algorithms, \ROBUSTTRUST, which has a parameter $0\leq \alpha \leq 1$ and works as follows. 
With a probability of $\alpha$, \ROBUSTTRUST applies \TRUSTGREEDY.
(Applying \TRUST, instead of \TRUSTGREEDY, gives the same consistency and robustness
results.) With probability 
$1-\alpha$, 
\ROBUSTTRUST ignores the predictions, and applies the Classify-and-Randomly-Select (\CRS) algorithm described in Theorem 13.7 in~\cite{BE98}. \CRS is strictly $\lceil \log m \rceil$-competitive (they use competitive ratios at least one in a version of the problem where requests have a limited time duration). A similar algorithm
was originally proven $O(\log m)$-competitive in~\cite{ABFR94}.

For completeness, we include the \CRS algorithm. 
To avoid the problem of $m$ possibly not being a power of $2$, we define
$\ell=\lceil \log m \rceil$ and $m'=2^{\ell}$. Thus, the algorithm will define
its behavior for a longer line and some sequences that cannot exist.

We define a set of $\ell$ \emph{levels} for the possible requests.
Since $m'$ is a power of two, there is an odd number of edges, so
the middle edge, $e_1$, in the line is well
defined. We define the set $E_1=\{ e_1\}$ and let Level~1 consist of all intervals
containing $e_1$. After Levels~1 through $i$ are defined, we define $E_{i+1}$
and  Level~$i+1$ as follows: After removing all edges in $E_1\cup E_2 \cup
\cdots \cup E_i$ from the line, we are left with $2^{i}$ segments, each consisting
of $2^{\ell-i}$ vertices. The set $E_{i+1}$ consists of the middle edges of these
segments, and Level $i+1$ consists of all intervals, not in any of the Levels
$1$ through $i$, but containing an edge in $E_{i+1}$. Thus, the levels create a partition
of all possible intervals.

The algorithm \CRS initially chooses a level $i$ between $1$ and $\ell$, each with
probability $\frac{1}{\ell}$. It accepts any interval in Level~$i$ that does not
overlap an interval it already has accepted. Any intervals not in Level~$i$
are rejected. 

\begin{theorem}\label{th:robtrust}
For the online interval scheduling problem, \ROBUSTTRUST (\RT) with parameter $\alpha$ 
has consistency at least $\alpha$ and robustness at least $\frac{1-\alpha}{\lceil \log m \rceil}$. 
\end{theorem}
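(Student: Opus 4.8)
The plan is to decouple the two random branches of \RT and recombine them via linearity of expectation. On input $I$ with prediction \IPRED, \RT runs \TRUSTGREEDY with probability $\alpha$ and runs \CRS (which ignores \IPRED) with probability $1-\alpha$, so $E[\RT(\IPRED,I)] = \alpha\cdot\TRUSTGREEDY(\IPRED,I) + (1-\alpha)\cdot E[\CRS(I)]$. Both bounds then follow by estimating the two terms in the two relevant regimes (error-free prediction for consistency, adversarial prediction for robustness).

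For \textbf{consistency}, I would take an error-free prediction, so $\FP=\FN=\emptyset$ and hence $\gamma(\IPRED,I)=0$. Theorem~\ref{th:trustgreedymain} then gives $\TRUSTGREEDY(\IPRED,I) \geq \big(1-\gamma(\IPRED,I)\big)\OPT(I) = \OPT(I)$. Since the profit of \CRS is a count of accepted intervals and is thus nonnegative, $E[\CRS(I)] \geq 0$, and therefore $E[\RT(\IPRED,I)] \geq \alpha\cdot\OPT(I)$; that is, \RT is (even strictly) $\alpha$-consistent.

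For \textbf{robustness}, I would let \IPRED be an arbitrary (adversarial) prediction. In this regime \TRUSTGREEDY may perform poorly, but its profit is still nonnegative, so its branch contributes at least $0$; meanwhile \CRS, being strictly $\lceil\log m\rceil$-competitive (Theorem 13.7 in~\cite{BE98}), satisfies $E[\CRS(I)] \geq \frac{1}{\lceil\log m\rceil}\OPT(I)$, up to an additive constant $b$ independent of $m$ and $I$. Combining, $E[\RT(\IPRED,I)] \geq (1-\alpha)\big(\frac{1}{\lceil\log m\rceil}\OPT(I) - b\big) = \frac{1-\alpha}{\lceil\log m\rceil}\OPT(I) - (1-\alpha)b$, so \RT has robustness at least $\frac{1-\alpha}{\lceil\log m\rceil}$.

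The only genuinely delicate point — the ``hard part,'' such as it is — is transferring the guarantee of \CRS from the bounded-duration-call model of~\cite{BE98} to our indefinite-duration, non-preemptive, any-order setting. I would note that \CRS is well defined on the path of length $m'=2^{\lceil\log m\rceil}\geq m$ because the levels partition \emph{all} possible intervals, so every request lies in exactly one level; and the competitiveness analysis of~\cite{BE98} uses only that, conditioned on the chosen level $i$, \CRS accepts a maximal greedy non-overlapping subset of the level-$i$ requests, which is precisely its behavior here. One should also observe that \CRS is a legal algorithm in our model (online, irrevocable, no preemption). Granting this, the rest of the proof is the one-line linearity-of-expectation computation above.
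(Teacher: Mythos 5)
Your proposal is correct and follows essentially the same route as the paper: condition on which branch \ROBUSTTRUST takes, use the optimality of \TRUSTGREEDY under error-free predictions (i.e., $\gamma=0$ in Theorem~\ref{th:trustgreedymain}) for consistency, and the $\frac{1}{\lceil\log m\rceil}$-competitiveness of \CRS for robustness, combining via linearity of expectation. The only cosmetic difference is that the paper re-derives the \CRS guarantee in this model with a short level-by-level argument (each Level-$i$ interval contains exactly one edge of $E_i$, so greedy within a level matches \OPT on that level), whereas you cite~\cite{BE98} and sketch why the analysis transfers --- which is the same underlying argument.
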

\begin{proof}
	We investigate \ROBUSTTRUST when all predictions are correct (consistency) and when predictions may be incorrect (robustness).
	
	Suppose all predictions are correct. \ROBUSTTRUST applies \TRUSTGREEDY with probability $\alpha$. 
	Since \TRUSTGREEDY is optimal when all predictions are correct, the expected profit of \ROBUSTTRUST is at least $\alpha \cdot \OPT$.
	Therefore, the competitive ratio (consistency) of \ROBUSTTRUST is at least~$\alpha$.

	Suppose some predictions are incorrect.
	If the intervals in Level~$i$
	are the only intervals given, and \CRS chooses that level, then
	\CRS accepts as many intervals as \OPT does,
	since each interval in Level~$i$ contains an edge in $E_i$, and no intervals
	containing more than one edge in $E_i$ exist.
        Since the number of levels
	is $\lceil\log m\rceil$, the expected number of
	intervals that \CRS accepts from any given level of
        \OPT's configuration is at least
	$\frac{1}{\lceil\log m\rceil}$ times the number of intervals \OPT accepted from that level,
	so by the linearity of expectations, this totals
	$\frac{1}{\lceil \log m \rceil}\OPT$. \CRS is
	chosen with probability $1-\alpha$, so the robustness is
	at least $\frac{1-\alpha}{\lceil \log m \rceil}$.
\end{proof}

\begin{center}
	\begin{table}[!b]\label{table:stat}
		\centering
\resizebox{\textwidth}{!}{
		\begin{tabular}{|c|c|c|c|c|c|}
			\hline
			name & input size ($N$) & no. timesteps ($m$) & max. length & avg. length \\
			\hline \hline 
			LLNL-uBGL-2006-2 & 13,225 & 16,671,553 &  14,403 & 1,933.92\\
			\hline
			NASA-iPSC-1993-3.1 & 18,066 & 7,947,562 &  62,643 & 772.21 \\
			\hline 
			CTC-SP2-1996-3.1 & 77,205 & 8,986,769 & 71,998 & 11,279.61 \\ 
			\hline 
			SDSC-DS-2004-2.1 & 84,893 & 31,629,689 & 6,589,808 & 7,579.36 \\ 
			\hline
		\end{tabular}
}
		\caption{Details on the benchmarks from~\cite{ChapinCFJLSST99} used in our experiments.\label{tablek}}
	\end{table}
\end{center}

\subsection{Experimental Results}
We present an experimental evaluation of \TRUST and \TRUSTGREEDY for interval scheduling in comparison with the \GREEDY algorithm, which serves as a baseline online algorithm, and \OPT, which serves as the performance upper bound. Our code and results are available at~\cite{OurCode}.

We evaluate our algorithms using real-world scheduling data for parallel machines~\cite{ChapinCFJLSST99}. Each benchmark from~\cite{ChapinCFJLSST99} specifies the start and finish times of tasks as scheduled on parallel machines with several processors.  
We use these tasks to generate inputs to the interval scheduling problem; Table~\ref{tablek} details the interval scheduling inputs we generated from benchmarks of~\cite{ChapinCFJLSST99}. 
For each benchmark with $N$ tasks, we create an instance $I$ of an interval scheduling problem by randomly selecting $n = \lfloor N/2 \rfloor$ tasks from the benchmark and randomly permuting them.
This sequence serves as the input to all algorithms. To generate the prediction, we consider $1000$ equally distanced values of $d \in [0,n]$. For each value of $d$, we initiate the prediction set $\IPRED$ with the set of intervals in $I$, remove $|\FN|=d$ randomly selected intervals from $\IPRED$ and add to it $|\FP|=d$ randomly selected intervals from the remaining $N-n$ tasks in the benchmark. The resulting set $\IPRED$ is given to \TRUST and \TRUSTGREEDY as prediction $\IPRED$. For each value of $d$, we compute the normalized error $\gamma(\IPRED,I) = \frac{\OPT(\FN\cup \FP)}{\OPT(I)}$, and report the profit of \TRUST and \TRUSTGREEDY as a function of $\gamma$.

\begin{figure}[!b]
	\centering
	\begin{subfigure}[b]{0.495\textwidth}
		\centering
		\includegraphics[page=2,trim = 3.3cm 17.55cm 3.2cm 3.25cm,clip,scale=.55]{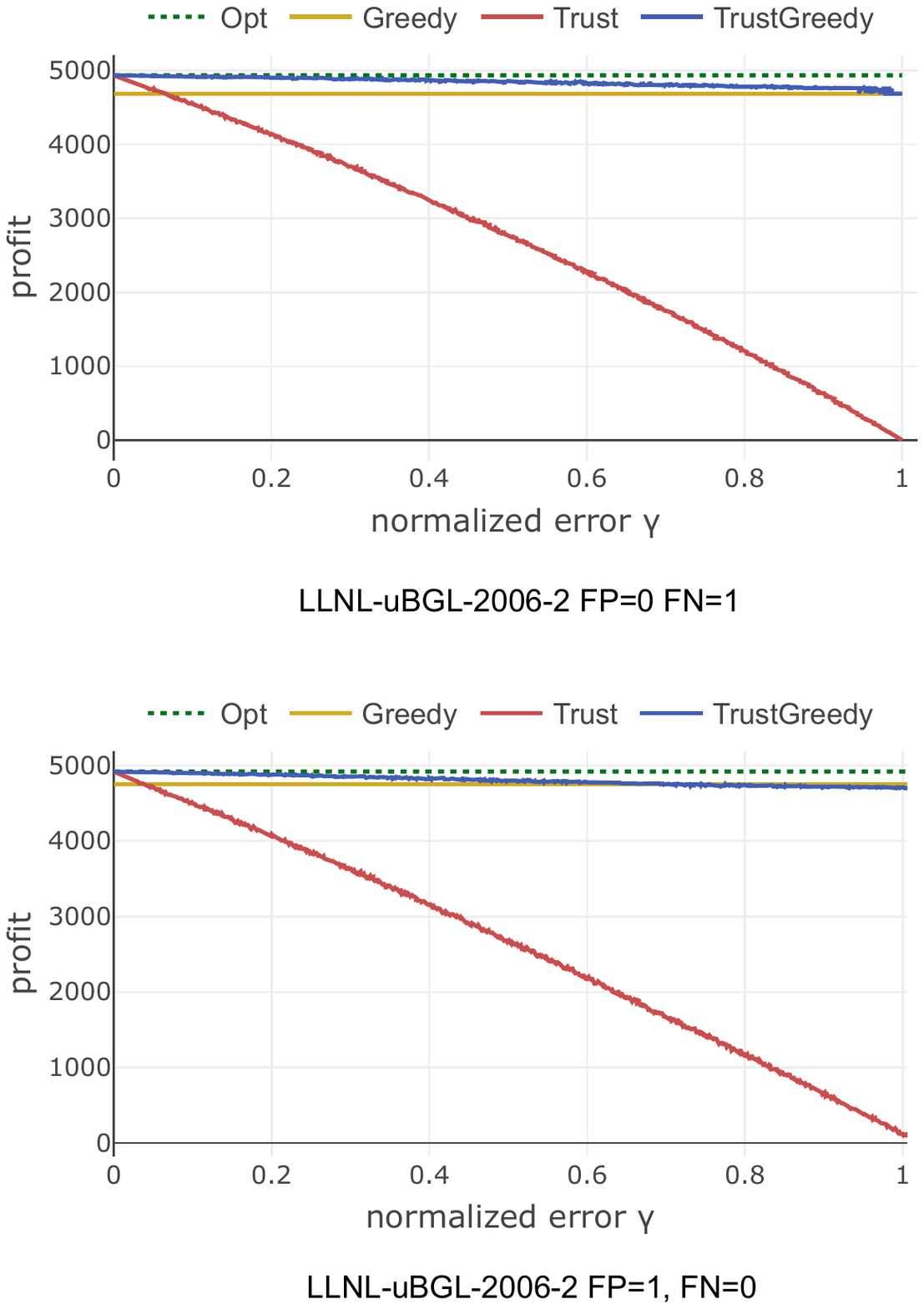}
		\caption{LLNL-uBGL-2006-2}
		\label{fig:LLNLEqual}
	\end{subfigure}
	\begin{subfigure}[b]{0.495\textwidth}	
		\centering
		\includegraphics[page=6,trim = 3.3cm 8.7cm 3.2cm 12.1cm,clip,scale=.55]{allplots}
		\caption{ SDSC-DS-2004-2.1-cln}
		\label{fig:SDSCEqual}
	\end{subfigure}	
	\begin{subfigure}[b]{0.495\textwidth}
		\centering
			\includegraphics[page=4,trim = 3.3cm 8.7cm 3.2cm 11.1cm,clip,scale=.55]{allplots}
		\caption{CTC-SP2-1996-3.1-cln}
		\label{fig:CTCEqual}
	\end{subfigure}
	\hfill
	\begin{subfigure}[b]{0.495\textwidth}
		\centering
		\includegraphics[page=3,trim = 3.3cm 8.7cm 3.2cm 11.1cm,clip,scale=.55]{allplots}
		\caption{ NASA-iPSC-1993}
		\label{fig:NASAEqual}
	\end{subfigure}
	
	\caption{Profit as a function of normalized error value \vspace*{1mm}}
	
	\label{fig:mainexp}
\end{figure}

\begin{figure}[!b]
	\centering
	\begin{subfigure}[b]{0.495\textwidth}
		\centering
		\includegraphics[page=1,trim = 3.3cm 17.55cm 3.2cm 3.15cm,clip,scale=.55]{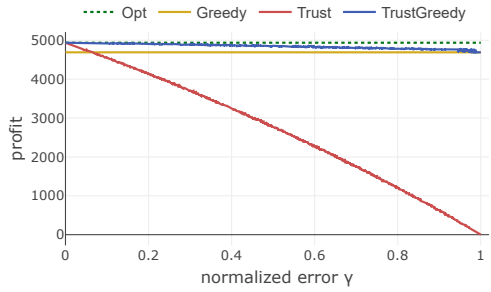}
		\caption{LLNL-uBGL-2006-2, no \FP}
		\label{fig:LLNLNOFP}
	\end{subfigure}
	\hfill
	\begin{subfigure}[b]{0.495\textwidth}
		\centering
		\includegraphics[page=1,trim = 3.3cm 8.7cm 3.2cm 12.1cm,clip,scale=.55]{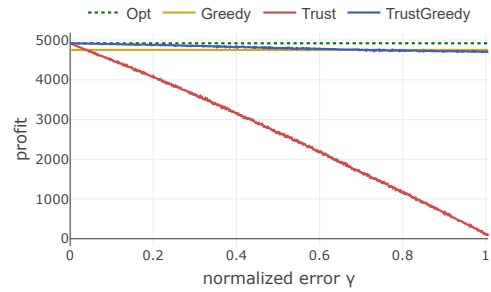}		
		\caption{LLNL-uBGL-2006-2, no \FN}
		\label{fig:LLNLNoFN}
	\end{subfigure} \vspace*{3mm} \\
	\begin{subfigure}[b]{0.495\textwidth}
	\centering
		\includegraphics[page=5,trim = 3.3cm 7.6cm 3.2cm 13.2cm,clip,scale=.55]{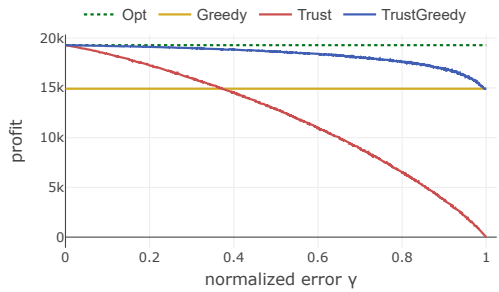}
		\caption{ SDSC-DS-2004-2.1-cln, no \FP}
		\label{fig:SDSCNoFP}
	\end{subfigure}
	\hfill	
	\begin{subfigure}[b]{0.495\textwidth}
		\centering
		\includegraphics[page=6,trim = 3.3cm 17.55cm 3.2cm 3.25cm,clip,scale=.55]{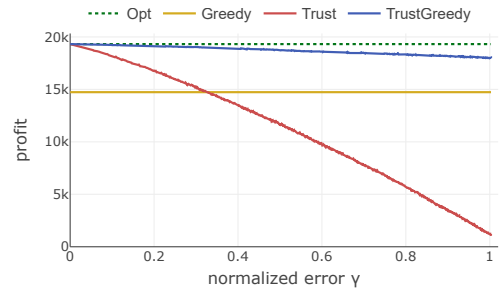}%
		\caption{ SDSC-DS-2004-2.1-cln, no \FN}
		\label{fig:SDSCNoFN}
	\end{subfigure}	
	\caption{Profit as a function of normalized error value in the absence of false positives (a), (c) and false negatives (b), (d).}
	\label{fig:three graphs}
\end{figure}

Figure~\ref{fig:mainexp} shows the results for two representative benchmarks from~\cite{ChapinCFJLSST99}, namely, LLNL (the workload of the BlueGene/L system installed at Lawrence Livermore National Lab), SDSC (the workload log from San Diego Supercomputer Center),  NASA-iPSC (scheduling log from Numerical Aerodynamic Simulation -NAS- Systems Division at NASA Ames Research Center) and CTC-SP2 (Cornell Theory Center IBM SP2 log). These four benchmarks are selected to represent a variety of input sizes and interval lengths. The results are aligned with our theoretical findings: \TRUST quickly becomes worse than \GREEDY as the error value increases, while \TRUSTGREEDY degrades gently as a function of the prediction error. In particular, \TRUSTGREEDY is better than \GREEDY for almost all error values. We note that \GREEDY
performs better when there is less overlap between the input intervals, which is the case in LLNL compared to SDSC. In an extreme case, when no two intervals overlap, \GREEDY is trivially optimal. Nevertheless, even for LLNL, \TRUSTGREEDY is not much worse than \GREEDY for extreme values of error: the profit for the largest normalized error of $\gamma = 1.87$ was 5149 and 5198 for \TRUSTGREEDY and \GREEDY, respectively. Note that for SDSC, where there are more overlaps between intervals, \TRUSTGREEDY is strictly better than \GREEDY, even for the largest error values. It is worth noting that, in an extreme case, where $\FP=\FN=n$, the predictions contain a completely different set from the input sequence. In that case, $|\FP\cup\FN| = 2n$, and
$\gamma = \frac{\OPT(\FP\cup\FN)}{\OPT(I)}$ takes values in $[1.5,2]$. 

We also experiment in a setting where false positives and negatives contribute differently to the error set. We generate the input sequences in the same way as in the previous experiments. To generate the prediction set $\IPRED$, we consider $1000$ equally-distanced values of $d$ in the range $[0,n]$ as before. We first consider a setting in which all error is due to false negatives; for that, we generate $\IPRED$ by removing $d$ randomly selected intervals from $I$. In other words, $\IPRED$ is a subset of the intervals in $I$. 
Figures~\ref{fig:LLNLNOFP} and~\ref{fig:SDSCNoFP}  illustrate the profit of \TRUST and \TRUSTGREEDY in this case. We note that \TRUSTGREEDY is strictly better than both \TRUST and \GREEDY. In an extreme case, when $d=n$, \IPRED becomes empty and \TRUSTGREEDY becomes \GREEDY; in other words, \GREEDY is the same algorithm as \TRUSTGREEDY with the empty predictions set $\IPRED$.

We also consider a setting in which there are no false negatives. For that, we generate $\IPRED$ by adding $d$ intervals to \IPRED. In other words, $\IPRED$ will be a superset of intervals in $I$. Figures~\ref{fig:LLNLNOFP} and~\ref{fig:SDSCNoFP}  illustrate the profit of \TRUST and \TRUSTGREEDY in this case. In this case, the profit of \TRUST and \TRUSTGREEDY is similar to the setting where both false positives and negatives contributed to the error set. In particular, \TRUST quickly becomes worse than \GREEDY as the error increases, while \TRUSTGREEDY degrades gently as a function of the prediction error.

\section{Related Problems: Matching and Independent Set}

In~\cite{GVV97}, the authors
observe that finding
disjoint paths on stars is equivalent to finding maximal matchings on
general graphs, where each request in the input to the disjoint path allocation problem bijects to an edge in the input graph for the matching problem. Therefore, we can extend the results of Section~\ref{sect:dpa} to the following \emph{online matching problem}. The input is a graph $G=(V, E)$, where $V$ is known, and edges in $E$ appear in an online manner; upon arrival of an edge, it must be added to the matching or rejected. The prediction is a set $\EPRED$ that specifies edges in $E$. As before, we use $\FP$ and $\FN$ to indicate the set of false positives and false negatives and define \[ \ernormE = \frac{\OPT(\FP \cup\FN)}{\OPT(E)}\,, \] where $\OPT(E)$ indicates the size of an optimal matching for graph $G=(V,E)$. 

The correspondence between the two problems is as follows: Consider a set of
requests on a star. Each such
request is a pair of vertices. 
We can assume no pair contains the star's center since all such requests should be accepted if they can be.
For the matching problem, the pairs of vertices from the disjoint paths
problem on the star can be the edges in the graph. A feasible solution
to the disjoint paths problem corresponds to a matching and vice versa.
One can similarly consider an instance of a matching problem, and the
endpoints of the edges can be the non-center vertices of the star in
the disjoint paths problem.

\newsavebox\mysubpic
\sbox{\mysubpic}{%
	\begin{tikzpicture}[scale=.5]
		\draw[rotate=0,dashed] (0,0) ellipse (1.9cm and .75cm);
	\end{tikzpicture}
}
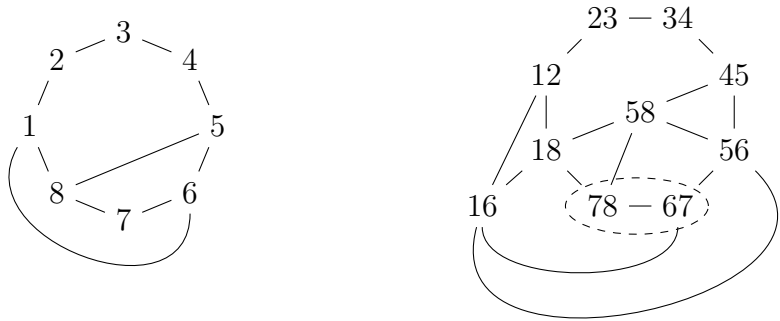
\begin{figure}[!t]

	\begin{center}
	  \begin{subfigure}[t]{0.46\textwidth}
            \centering
		{\begin{tikzpicture}[scale=.5]
				\foreach \X [count=\Y] in {1,2,3,4,5,6,7,8}
				{\node  (cn\Y) at ({-(\Y+3)*360/8}:2.5) {$\X$}; }
				\foreach \Y [remember=\Y as \LastY (initially 8)]in {1,...,8}
				{\draw (cn\LastY) -- (cn\Y);}
				\draw (cn8) -- (cn5);
				\draw (cn1) to [out=240,in=270,looseness=1.5] (cn6);
                                \node (fake) at (0,-7) {\mbox{}};
		\end{tikzpicture}}
		\label{subfig:matching}
	\caption{The graph, $G$, for matching corresponding to the star graph for disjoint paths with requests $(1,2)$, $(2,3)$, $(3,4)$, $(4,5)$, $(5,6)$, $(6,7)$, $(7,8)$, $(1,8)$, $(1,6)$, and $(5,8)$.}
	\end{subfigure}\hfill
	  \begin{subfigure}[t]{0.46\textwidth}
            \centering
		\begin{tikzpicture}[scale=.5]
			\node (n1) at (-2.5,1) {12};
			\node (n2) at (-1,2.5) {23};
			\node (n3) at (1,2.5) {34};
			\node (n4) at (2.5,1) {45};
			\node (n5) at (2.5,-1) {56};
			\node (n6) at (1,-2.5) {67};
			\node (n7) at (-1,-2.5) {78};
			\node (n8) at (-2.5,-1) {18};
			\node (n9) at (-4.2,-2.5) {16};
			\node (n10) at (0,0) {58};
			\draw (n9) -- (n1) -- (n2) -- (n3) -- (n4) -- (n5) -- (n6) -- (n7) -- (n8) -- (n9);
			\draw (n10) -- (n4);
			\draw (n10) -- (n5);
			\draw (n10) -- (n7);
			\draw (n10) -- (n8);
			\draw (n1) -- (n8);
			\draw (n9) to [out=270,in=270,looseness=.8] (n6);
			\draw (n9) to [out=255,in=315,looseness=1.6] (n5);
			\node at (0.05,-2.5) {\usebox{\mysubpic}};
		\end{tikzpicture}
		\label{subfig:independent set}
	\caption{The line graph, $G'$, corresponding to~$G$, where the two digits in a vertex name in~$G'$ indicate the edge (given by its two endpoints) from~$G$ that the vertex corresponds to. The dashed ellipse indicates the contraction of the two vertices showing that $G'$ contains $K_{2,3}$ as a minor.}
	\end{subfigure}
	\end{center}
	\caption{Graphs for matching and independent set}
	\label{figure-matchings-independent-set}
\end{figure}
Using this correspondence between disjoint paths on a star and matchings in
general graphs, for the star $S_8$ with non-center vertices $1,2,\ldots,8$ and requests $(1,2),(2,3),(3,4),(4,5),(5,6),(6,7),(7,8),(1,8),(1,6),(5,8)$, we get the graph $G=(V,E)$ for
matching, where
\begin{align*}
  & V=\SET{1,2,3,4,5,6,7,8} \text{  and }\\
  & E=\SET{(1,2),(2,3),(3,4),(4,5),(5,6),(6,7),(7,8),(1,8),(1,6),(5,8)}.
\end{align*}
See also Figure~\ref{figure-matchings-independent-set}.
Note that the edges in this graph correspond to the requests that are
used in the proof of Theorem~\ref{thm:star}.  The proof can be
simulated in this new setting so that the number of requests accepted
in the different cases in Theorem~\ref{thm:star} is the same as the
number of edges in the matchings found in the corresponding subgraphs
of~$G$.  Thus, the same result holds for matchings in any graph class
containing this graph.

All edges have one even-numbered endpoint and one odd, so this includes
the bipartite graph class.
It is also planar but not an interval or chordal graph.

Given the correspondence between interval scheduling and the matching problem, the following is immediate from Theorems~\ref{pr:trustupper} and~\ref{thm:star}. 

\begin{corollary}
\label{cor:matching}
  There is a strictly $(1-2\gamma)$-competitive algorithm, \TRUST, for
  the online matching problem under the edge-arrival model.
        For $0 \leq \gamma \leq 1/4$, this is optimal among deterministic algorithms, even on bipartite graphs as well as planar graphs.
\end{corollary}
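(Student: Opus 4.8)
The plan is to obtain both halves of the corollary directly from the disjoint-path results, using the equivalence between online matching in the edge-arrival model and online disjoint-path allocation on a star that was set up in the paragraphs preceding the statement. Nothing new needs to be constructed; the work is all in translating through the reduction.

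For the positive part, I would argue as follows. Given any graph $G=(V,E)$ and any prediction $\EPRED \subseteq E$, build the star $S$ with a fresh center and leaf set $V$, and identify each edge $\{u,v\}\in E$ with the request $(u,v)$ on $S$; this turns $\EPRED$ into a predicted request set $\IPRED$. A set of edges of $G$ is a matching precisely when the corresponding requests admit edge-disjoint paths on $S$ (each such path must use the two spokes at its endpoints), so feasible solutions, optimal values, and the sets $\FP,\FN$ all correspond under the bijection; in particular $\gamma(\EPRED,E)$ equals the normalized error of the star instance. Running \TRUST on the matching instance is, request by request, the same computation as running \TRUST on the star instance, so Theorem~\ref{pr:trustupper} gives $\TRUST(\EPRED,E)\ge\bigl(1-2\gamma(\EPRED,E)\bigr)\OPT(E)$ with additive constant $0$, i.e.\ strict $(1-2\gamma)$-competitiveness.

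For the negative part, I would take the adversarial construction in the proof of Theorem~\ref{thm:star} — the $p$ disjoint copies of $S_8$ with exactly the ten requests listed just before the corollary — and push it through the same correspondence. The $p$ star components become $p$ disjoint copies of the $8$-vertex graph $G$ displayed above, the requests become its edges, and the adversary's responses translate verbatim since they depend only on which requests have so far been accepted, which corresponds to which edges the matching algorithm has placed in its matching. Hence the per-component tallies $\ALG_i$, $\OPT_i$, $\eta_i$ are unchanged, $\gamma$ can be driven arbitrarily close to the target while $\OPT$ grows without bound, and no deterministic online matching algorithm beats $1-2\gamma$ for $0\le\gamma\le\frac14$. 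Since $G$ is bipartite (each edge joins an even- and an odd-numbered vertex) and planar, and a disjoint union of bipartite (respectively planar) graphs is again bipartite (respectively planar), the lower bound persists under the restriction to bipartite graphs and to planar graphs.

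I do not expect a genuine obstacle. The only two points that deserve an explicit sentence are (i) verifying that the error parameter is preserved exactly under the request/edge bijection, so that the same $\gamma$ appears on both sides of the reduction, and (ii) noting that the adversary of Theorem~\ref{thm:star} is adaptive, so one must confirm that its branching is driven solely by the algorithm's accept/reject decisions — which it is — making the translation to the matching setting immediate.
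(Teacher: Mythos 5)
Your proposal is correct and takes essentially the same route as the paper: the paper derives the corollary directly from Theorems~\ref{pr:trustupper} and~\ref{thm:star} via the star/matching correspondence set up in the preceding paragraphs, simulating the adversarial construction on the corresponding graph and observing that it is bipartite and planar. Your two explicit checkpoints (preservation of $\gamma$ under the bijection, and that the adaptive adversary branches only on accept/reject decisions) are exactly the points the paper relies on implicitly.
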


Using the correspondence between matchings in a graph, $G$, and
an independent set in the line graph of~$G$,
we can get the same result for the {\em independent set problem} on line graphs.
The line graph of a graph, $G$, has a vertex for each edge in~$G$
and an edge between two vertices if the corresponding edges in~$G$
share a vertex.

The line graph $G'=(V',E')$ of the graph above used for matching is defined by
\[V'=\SET{12, 23, 34, 45, 56, 67, 78, 18, 16, 58},\]
where, for brevity, we use the notation $12$ to denote the vertex
corresponding to the edge $(1,2)$ from~$G$.
The set of edges is then
\[\begin{array}{r@{}l}
	E'=\{ & (12,23), (23,34), (34,45), (45,56), (56,67), (67,78), (78,18), (18,16), \\
	& (16,12), (58,18), (58,78), (58,56), (58,45), (12,18), (16,67), (16,56)
	\}\,.
\end{array}\]
Requests from the proof in Theorem~\ref{thm:star} correspond to vertices here.
See also Figure~\ref{figure-matchings-independent-set}.

We note that the graph $G'$ is planar, but not outerplanar, since,
contracting $67$ and $78$ into one vertex, $67\textrm{-}78$, the sets
$\SET{16,58}$ and $\SET{18,56,67\textrm{-}78}$ form a $K_{2,3}$ minor,
which is a so-called forbidden subgraph for outerplanarity~\cite{CH67,H69}.
Also, it is not chordal.
However, the lower bound of $1-\gamma$ for deterministic interval scheduling algorithms (Theorem~\ref{thm:generallower}) clearly holds for independent sets in interval
graphs, too, by considering the interval graph corresponding to a set
of intervals on the line.  

Summing up, we obtain:
\begin{corollary}
  \label{cor:independentset}
  For the online independent set problem under the vertex-arrival model, the following hold.
\begin{itemize}
	\item   On line graphs, there is a strictly $(1-2\gamma)$-competitive algorithm, \TRUST. 
	\item   For $0 \leq \gamma \leq 1/4$, no deterministic algorithms can be better than $(1-2\gamma)$-competitive, on line graphs as well as planar graphs.
	\item   On interval graphs, no deterministic algorithm is better than $(1-\gamma)$-competitive.
\end{itemize}

\end{corollary}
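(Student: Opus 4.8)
The plan is to derive all three parts by transferring results already proved in the paper, using the standard bijection between matchings in a graph $G$ and independent sets in its line graph $L(G)$: each edge of $G$ is a vertex of $L(G)$, two such vertices are adjacent in $L(G)$ exactly when the corresponding edges of $G$ share an endpoint, a set of edges of $G$ is a matching iff the corresponding vertex set of $L(G)$ is independent, and hence $\OPT$ agrees on the two sides. Since the edge-arrival order in $G$ is literally the vertex-arrival order in $L(G)$, an online edge-arrival matching instance on $G$ with predicted edge set $\EPRED$ \emph{is the same object} as an online vertex-arrival independent-set instance on $L(G)$ with predicted vertex set the image of $\EPRED$; in particular $\FP$, $\FN$, the error $\OPT(\FP\cup\FN)$, and the normalized error $\gamma$ are all preserved. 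Because every line graph has the form $L(G)$, this identification covers the whole class of line graphs.

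For the first bullet I would invoke Corollary~\ref{cor:matching}: \TRUST is strictly $(1-2\gamma)$-competitive for online matching under edge arrival. Via the correspondence above, \TRUST on $G$ becomes \TRUST on $L(G)$ seen as an independent-set instance --- it merely computes an optimal solution in the predicted set and accepts the predicted-optimal items, which is equally well-defined for independent set (there is no path to commit to) --- so \TRUST is strictly $(1-2\gamma)$-competitive for online independent set on line graphs. For the second bullet, recall from the discussion around Figure~\ref{figure-matchings-independent-set} that the graph $G'$ obtained as the line graph of the graph instantiating the lower bound of Theorem~\ref{thm:star} in the matching setting is planar (though neither outerplanar nor chordal). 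Taking $p$ disjoint copies, exactly as in the proof of Theorem~\ref{thm:star}, preserves both planarity and the line-graph property, so the adversarial construction --- with the disjoint-path/matching requests now playing the role of vertices of $G'$ --- gives, for every $0\le\gamma\le\frac14$, an online independent-set instance that lies simultaneously in the class of line graphs and the class of planar graphs and on which no deterministic algorithm beats $1-2\gamma$; both lower bounds follow at once. For the third bullet, note that interval scheduling on a path is precisely online independent set, under vertex arrival, on the overlap graph of the given intervals (each interval is a vertex, overlap is an edge, interval arrival order equals vertex arrival order), so the $1-\gamma$ deterministic upper bound of Theorem~\ref{thm:generallower} transfers verbatim to interval graphs.

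The only point that needs genuine care --- and hence the ``main obstacle'' --- is the bookkeeping: one must verify that the correspondence preserves not just the combinatorial optimum but the \emph{online} structure and the error accounting, i.e.\ that vertex arrival on $L(G)$ coincides with edge arrival on $G$, that the predicted sets, $\FP$, $\FN$, and therefore $\gamma$ match exactly on both sides, and that the relevant classes are captured precisely (every line graph is some $L(G)$; the specific $G'$ of Figure~\ref{figure-matchings-independent-set} is planar; interval graphs are exactly the overlap graphs of intervals on a line). Once these routine identifications are made explicit, the three statements are immediate consequences of Corollary~\ref{cor:matching} and Theorem~\ref{thm:generallower}.
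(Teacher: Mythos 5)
Your proposal is correct and follows essentially the same route as the paper: the matching--independent-set correspondence via line graphs (preserving arrival order, $\FP$, $\FN$, and $\gamma$) to transfer Corollary~\ref{cor:matching}, the observation that the specific line graph $G'$ of Figure~\ref{figure-matchings-independent-set} is planar for the second bullet, and the identification of interval scheduling with independent set on the corresponding interval graph to transfer Theorem~\ref{thm:generallower} for the third. Your explicit remark that disjoint unions preserve both the line-graph and planarity properties is a detail the paper leaves implicit, but otherwise the arguments coincide.
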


\bibliography{refs}
\bibliographystyle{plain}

\end{document}